\edef\savecatcodeat{\the\catcode`@}
\def\@IFNEXTCHAR#1#2#3{\let\@tempe #1\def\@tempa{#2}\def\@tempb{#3}\futurelet
    \@tempc\@IFNCH}
\def\@IFNCH{\ifx \@tempc \@sptoken \let\@tempd\@xifnch
      \else \ifx \@tempc \@tempe\let\@tempd\@tempa\else\let\@tempd\@tempb\fi
      \fi \@tempd}
\def\tb@ifSpecChars#1#2{#1}
\def\tb@ifNoSpecChars#1#2{#2}
\def\tableau{%
  \bgroup
%
  \@IFNEXTCHAR[{\tb@tableauC}{\tb@tableauC[]}}     
\def\tb@tableauC[#1]{\hbox\bgroup%
    \let\\=\cr
    \def\bl{\global\let\tbcellF\tb@cellNF}%
    \def\tf{\global\let\tbcellF\tb@cellH}
%
    \dimen2=\ht\strutbox \advance\dimen2 by\dp\strutbox%
    \ifx\baselinestretch\undefined\relax%
    \else%
       \dimen0=100sp \dimen0=\baselinestretch\dimen0%
       \dimen2=100\dimen2 \divide\dimen2 by\dimen0%
    \fi%
    \let\tpos\tb@vcenter
    \tb@initYoung
    \tb@options#1\eoo
    \let\arrow\tb@arrow%
    \dimen0=\Tscale\dimen2%
    \dimen1=\dimen0 \advance\dimen1 by \tb@fframe%
    \lineskip=0pt\baselineskip=0pt
%
    \def\tb@nothing{}%
    \def\endcellno{$\rss\egroup\bss\egroup}
    \def\endcell{\endcellno\kern-\dimen0}
    \def\begincell{\vbox to\dimen0\bgroup\vss\hbox to\dimen0\bgroup\hss$}%
    \let\overlay\tb@overlay%
    \let\fl\tb@fl%
    \let\fr\tb@fr%
    \let\lss\hss\let\rss\hss\let\tss\vss\let\bss\vss
    \def\mkcell##1{
        \let\tbcellF\tb@cellD
        \def\tb@cellarg{##1}
        \ifx\tb@cellarg\tb@nothing\let\tb@cellarg\tb@cellE\fi%
%
	        \begincell\tb@cellarg\endcellno
	        \tbcellF
    }%
    \let\savecellF\tbcellF
    \tb@tableauD%
}%
\let\tb@savetableauD\tableauD
\gdef\tableauD#1{%
  \tpos{\tabskip=0pt\halign{&\mkcell{##}\cr#1\crcr}}%
  \global\let\tbcellF\savecellF
  \egroup
  \egroup}
\let\tb@tableauD\tableauD
\let\tableauD\tb@savetableauD
\let\tb@savetableauD\undefined
\def\tb@options#1{\ifx#1\eoo\relax\else\tb@option#1\expandafter\tb@options\fi}
\def\tb@option#1{%
  \if#1t\let\tpos\tb@vtop\fi
  \if#1c\let\tpos\tb@vcenter\fi
  \if#1b\let\tpos\vbox\fi
  \if#1F\tb@initFerrers\fi
  \if#1Y\tb@initYoung\fi
  \if#1E\tb@initEmpty\fi
  \if#1s\tb@initSmall\fi
  \if#1m\tb@initMedium\fi
  \if#1l\tb@initLarge\fi
  \if#1p\tb@initPartition\fi
  \if#1a\tb@initArrow\fi
}
\def\tb@vcenter#1{\ifmmode\vcenter{#1}\else$\vcenter{#1}$\fi}
\def\tb@vtop#1{\hbox{\raise\ht\strutbox\hbox{\lower\dimen0\vtop{#1}}}}
\def\tb@initPartition{\def\Tscale{.3}}
\def\tb@initSmall{\def\Tscale{1}}
\def\tb@initMedium{\def\Tscale{2}}
\def\tb@initLarge{\def\Tscale{3}}
\def\tb@initArrow{\dimen2=1.25em}
\def\tb@initYoung{%
  \def\tb@cellE{}
  \let\tb@cellD\tb@cellN
  \def\sk{\global\let\tbcellF\tb@cellNF}}
\def\tb@initFerrers{%
  \def\tb@cellE{\bullet}
  \let\tb@cellD\tb@cellNF
  \def\sk{\bullet}}
\def\tb@initEmpty{%
  \def\tb@cellE{}
  \let\tb@cellD\tb@cellNF
  \def\sk{\global\let\tbcellF\tb@cellNF}}
\def\tb@sframe#1{%
  \vbox to0pt{
    \vss
    \hbox to0pt{%
      \hss
      \vbox to\dimen1{
        \hrule depth #1 height0pt
        \vss
        \hbox to\dimen1{
          \vrule width #1 height\dimen1
          \hss
          \vrule width #1
          }%
        \vss
        \hrule height #1 depth 0in
        }%
      \kern-\tb@hframe
      }%
    \kern-\tb@hframe}}
\def\tb@hframe{.2pt}\def\tb@fframe{.4pt}\def\tb@bframe{2pt}
\def\tb@cellH{\tb@sframe{\tb@bframe}}       
\def\tb@cellNF{}                            
\def\tb@cellN{\tb@sframe{\tb@fframe}}       
\let\tbcellF\tb@cellN                       
\def\tb@Fsframe{%
  \vbox to0pt{
    \vss
    \hbox to0pt{%
      \hss
      \vbox to\dimen1{
        \fr@iftop{\hrule depth \fr@width height0pt}{\vskip \fr@width}
        \vss
        \hbox to\dimen1{
	  \fr@ifleft{\vrule width \fr@width height\dimen1}{\hskip \fr@width}
          \hss
          \fr@ifright{\vrule width \fr@width height\dimen1}{\hskip \fr@width}
          }%
        \vss
        \fr@ifbottom{\hrule height \fr@width depth 0in}{\vskip\fr@width}
        }%
      \kern-\tb@hframe
      }%
    \kern-\tb@hframe}}
\def\tb@fr{\@IFNEXTCHAR[{\tb@fra}{\global\let\tbcellF\tb@cellN}}
\def\tb@fra[#1]{%
	\global\let\fr@iftop\tb@IFNO
	\global\let\fr@ifbottom\tb@IFNO%
	\global\let\fr@ifleft\tb@IFNO%
	\global\let\fr@ifright\tb@IFNO%
	\global\let\fr@width\tb@fframe%
	\global\let\tbcellF\tb@Fsframe%
	\froptions#1\eoo
}
\def\froptions#1{\ifx#1\eoo\relax\else\froption#1\expandafter\froptions\fi}
\def\froption#1{
	\if#1t\global\let\fr@iftop\tb@IFYES\fi
	\if#1b\global\let\fr@ifbottom\tb@IFYES\fi
	\if#1l\global\let\fr@ifleft\tb@IFYES\fi
	\if#1r\global\let\fr@ifright\tb@IFYES\fi
	\if#1w\global\let\fr@width\tb@bframe\fi
}
\def\tb@IFYES#1#2{#1}
\def\tb@IFNO#1#2{#2}
\def\tb@rpad{1pt}
\def\tb@lpad{1pt}
\def\tb@tpad{1.8pt}
\def\tb@bpad{1.8pt}
\def\tb@overlay{\endcell\@IFNEXTCHAR[{\tb@overlaya}{\begincell}}
\def\tb@overlaya[#1]{\vbox to\dimen0\bgroup%
  \tb@overlayoptions#1\eoo%
  \tss\hbox to\dimen0\bgroup\lss$}
\def\tb@overlayoptions#1{\ifx#1\eoo\relax\else\tb@overlayoption#1\expandafter\tb@overlayoptions\fi}
\def\tb@overlayoption#1{
  \if#1t\def\tss{\vskip\tb@tpad}\let\bss\vss\fi
  \if#1c\let\tss\vss\let\bss\vss\fi
  \if#1b\def\bss{\vskip\tb@bpad}\let\tss\vss\fi
  \if#1l\def\lss{\hskip\tb@lpad}\let\rss\hss\fi
  \if#1m\let\lss\hss\let\rss\hss\fi
  \if#1r\def\rss{\hskip\tb@rpad}\let\lss\hss\fi
}
\def\tb@fl{\endcell\begincell\vrule depth 0pt width \dimen0 height \dimen0 \endcell\begincell}
\def\tbgobble#1{}
\def\Pscale{1}
\def\skewptn{%
  \@IFNEXTCHAR[{\tb@ptnC}{\tb@ptnC[]}}     
\def\tb@ptnC[#1](#2){%
	{%
    \let\Tscale\Pscale
    \let\\=\cr
   \def\tb@initYoung{%
	\def\tb@cell{\hskip\dimen0\tb@cellN}%
	\def\tb@kernA{\kern.5\dimen0}%
	\def\tb@kernB{\kern-.5\dimen0}%
   }%
   \def\tb@initFerrers{%
	\def\tb@cell{\hbox to\dimen0{\hss$\bullet$\hss}}%
	\def\tb@kernA{}%
	\def\tb@kernB{}%
   }%
%
    \dimen2=\ht\strutbox \advance\dimen2 by\dp\strutbox%
    \ifx\baselinestretch\undefined\relax%
    \else%
       \dimen0=100sp \dimen0=\baselinestretch\dimen0%
       \dimen2=100\dimen2 \divide\dimen2 by\dimen0%
    \fi%
    \let\tpos\tb@vcenter
    \tb@initYoung
    \tb@options#1\eoo
    \dimen0=\Tscale\dimen2%
    \dimen1=\dimen0 \advance\dimen1 by \tb@fframe%
    \lineskip=0pt\baselineskip=0pt
    \tpos{\skewptnDnewline#2|)}%
	}%
}%
\def\skewptnDnewline#1|{\vbox to\dimen0\bgroup\vss\tb@kernA\hbox\bgroup\skewptnEon#1,|}
\def\skewptnDendline|{\egroup\tb@kernB\vss\egroup\@IFNEXTCHAR{)}{\tbgobble}{\skewptnDnewline}}
\def\skewptnEon#1,{%
	\tb@rpN=#1%
	\ifnum#1>0
	        \loop%
		\tb@cell%
	        \ifnum\tb@rpN>1\advance\tb@rpN by-1%
        	\repeat%
	\fi%
	\@IFNEXTCHAR{|}{\skewptnDendline}{\skewptnEoff}}
\def\skewptnEoff#1,{\hskip #1\dimen0%
	\@IFNEXTCHAR{|}{\skewptnDendline}{\skewptnEon}}
\let\savecatcodeat\undefined
\newtheorem{theorem}{Theorem}[section]
\newtheorem{lemma}[theorem]{Lemma}
\newtheorem{proposition}[theorem]{Proposition}
\theoremstyle{definition}
\newtheorem{example}[theorem]{Example}
\newcommand{\Z}{{\mathbb Z}}
\newcommand{\R}{{\mathbb R}}
\newcommand{\ot}{\otimes}
\newcommand{\tab}[1]{\vcenter{\tableau[Fsby]{#1}}}
\begin{document}

\title[Randomized box-ball systems]
{randomized box-ball systems, \\
limit shape of rigged configurations \\
and Thermodynamic Bethe ansatz}

\author{Atsuo Kuniba}
\address{Atsuo Kuniba, Institute of Physics, 
University of Tokyo, Komaba, Tokyo 153-8902, Japan}
\email{\texttt{atsuo.s.kuniba@gmail.com}}

\author{Hanbaek Lyu}
\address{Hanbaek Lyu, Department of Mathematics, 
University of California, Los Angeles, CA 90095, USA}
\email{\texttt{colourgraph@gmail.com}}

\author{Masato Okado}
\address{Masato Okado, Department of Mathematics, Osaka City University, 
Osaka, 558-8585, Japan}
\email{okado@sci.osaka-cu.ac.jp}

\maketitle

\vspace{0.5cm}
\begin{center}{\bf Abstract}
\end{center}
We introduce a probability distribution on the set of states 
in a generalized box-ball system associated with 
Kirillov-Reshetikhin (KR) crystals of type $A^{(1)}_n$.
Their conserved quantities induce $n$-tuple of random Young diagrams 
in the rigged configurations.
We determine their limit shape as the system gets large by
analyzing the Fermionic formula by thermodynamic Bethe ansatz.
The result is expressed as
a logarithmic derivative of a deformed 
character of the KR modules and agrees with the stationary local energy 
of the associated Markov process of carriers.
  
\vspace{0.4cm}

\section{Background and main results}\label{sec:bm}

\subsection{Box-ball systems}
The box-ball system (BBS) \cite{TS} is an integrable cellular automaton
in $1+1$ dimension.
By now it has been generalized widely, and numerous 
aspects have been explored connected to quantum groups, 
crystal base theory (theory of quantum groups at $q=0$), 
solvable lattice models, Bethe ansatz, soliton equations,  
ultradiscretization, tropical geometry and so forth.
See for example a review \cite{IKT} and the references therein.
Here is an example of time evolution $T^{(1)}_\infty$ 
in the 3-color BBS \cite{T93} 
in the notation specified later:

\begin{center}\noindent
\\
$t=0$:\quad 111122221111133211431111111111111111111111111111\vspace{0mm}\\
$t=1$:\quad 111111112222111133214311111111111111111111111111\vspace{0mm}\\
$t=2$:\quad 111111111111222211133243111111111111111111111111\vspace{0mm}\\
$t=3$:\quad 111111111111111122221132433111111111111111111111\vspace{0mm}\\
$t=4$:\quad 111111111111111111112221322433111111111111111111\vspace{0mm}\\
$t=5$:\quad 111111111111111111111112211322433211111111111111\vspace{0mm}\\
$t=6$:\quad 111111111111111111111111122111322143321111111111\vspace{0mm}\\
$t=7$:\quad 111111111111111111111111111221111322114332111111\vspace{0mm}\\
\end{center}

\vspace{0.2cm}\noindent
A letter 1 denotes an empty box whereas $a=2,3,4$ is the one filled with 
a ball with ``color" $a$. 
Initially there are three solitons 
$2222$, $332$ and $43$ with amplitude $4,3,2$.
They proceed to the right
with the velocity equal to the amplitude and eventually 
undergo collisions with messy intermediate states.
However after enough time steps
they come back and separate in exactly the original 
amplitude $2,3,4$ in the reverse order, 
not being smashed into pieces nor glued together\footnote{In $n$-color BBS in general, 
the balls are labeled with $2,3,\ldots, n+1$, and 
a consecutive array of balls $n+1\ge a_1 \ge \cdots \ge a_m \ge 2$
separated sufficiently from other balls 
behaves as a soliton with amplitude = velocity = $m$.
Various choices of $a_1 \ge \cdots \ge a_m$ yield 
internal degrees of freedom of solitons, like quarks in the  
hadrons uud (proton), udd (neutron), uds ($\Lambda$) etc.
The example demonstrates essential features of the soliton scattering;  
interchange of internal degrees of freedom and 
phase shift of asymptotic trajectories.
The final list of solitons is known to be independent of the order of 
collisions of the initial ones (the Yang-Baxter property).}.
This is a manifestation of the integrability, or put more 
practically, existence of 
conserved quantities, either explicit or hidden, governing the dynamics.
The original time evolution $t \rightarrow t+1$ in the $n$-color BBS
was defined by a ball moving algorithm \cite{T93}
as $\mathcal{K}_2\circ\mathcal{K}_3 \circ \cdots \circ \mathcal{K}_{n+1}$,
where $\mathcal{K}_a$ moves every ball with color $a$ once 
starting from the leftmost one successively to its nearest right empty box.
So the number of balls with each color is obviously preserved.
With some effort it is also possible to show that the list of 
amplitude of solitons, if defined properly, 
is also conserved. In the above example it is a partition $(4,3,2)$.
However a quite nontrivial and essential question is; what is 
the {\em complete set} of 
conserved quantities for the general $n$-color BBS?

\subsection{Rigged configuration as action-angle variables}

The answer is known to be an $n$-tuple of Young diagrams.
It was derived from the solution of a more general problem of  
constructing the {\em action-angle variables} of the BBS \cite{KOTY,KOSTY}.
By action variables we mean a set of conserved quantities 
and by the angle variables those {\em linearizing} the dynamics.
The integrability of BBS allows us to transform the system bijectively
into action-angle variables(!)
For the BBS states in the above time evolution, they are combinatorial objects 
that look as follows:

\begin{center}
\begin{picture}(180,48)(50,0)
\setlength{\unitlength}{3.6mm}
\multiput(5,0)(1,0){3}{\line(0,1){3}}
\put(5,0){\line(1,0){2}}
\put(5,1){\line(1,0){3}}
\put(5,2){\line(1,0){4}}
\put(5,3){\line(1,0){4}}
\put(8,1){\line(0,1){2}}
\put(9,2){\line(0,1){1}}
\put(9.3,2.15){$4t$}
\put(8.3,1.15){$6+3t$}
\put(7.3,0.15){$11+2t$}
\put(6.5,3.8){$\mu_1$}
\multiput(14,1)(1,0){2}{\line(0,1){2}}
\put(14,1){\line(1,0){1}}
\multiput(14,2)(0,1){2}{\line(1,0){3}}
\multiput(16,2)(1,0){2}{\line(0,1){1}}
\put(15,3.8){$\mu_2$}
\put(17.3,2.15){1}
\put(15.3,1.15){0}
\multiput(20,2)(1,0){2}{\line(0,1){1}}
\multiput(20,2)(0,1){2}{\line(1,0){1}}
\put(20,3.8){$\mu_3$}
\put(21.3,2.15){0}
\end{picture}
\end{center}

\noindent
For the $n$-color BBS in general, 
there are an $n$-tuple of Young diagrams 
$\mu_1,\ldots, \mu_n$ in which each row is assigned with 
an integer.
The $n$-tuple of Young diagrams and 
the assigned integers are called {\em configuration} and 
{\em rigging}, respectively.
Thus in short, the action-angle variables of the BBS are
{\em rigged configurations} \cite{KOSTY}.
It is the configuration that is preserved and the 
riggings that change linearly in time. 
One indeed sees that the first Young diagram $\mu_1=(4,3,2)$ 
gives the list of amplitude of solitons which remains invariant 
under the time evolution.
The other ones $\mu_2, \ldots, \mu_n$ are ``higher"
conserved quantities reflecting the 
internal degrees of freedom of solitons\footnote{Unfortunately 
to extract them is not so simple and requires a
nested Bethe ansatz (Gelfand-Zetlin) type resolution \cite{KOSTY}.}.
The $n$-color BBS is endowed with the higher time evolutions 
besides the simplest one 
$\mathcal{K}_2\circ\mathcal{K}_3 \circ 
\cdots \circ \mathcal{K}_{n+1}$ mentioned before. 
They are all commutative 
and change the riggings attached to $\mu_2, \ldots, \mu_{n}$ linearly.

Rigged configurations for type $A_n$ has been formulated 
most generally in \cite{KSS}
extending the invention \cite{KKR, KR1} in 1980's.
These works were devoted to a proof of the 
{\em Fermionic formula} for a generalized 
Kostka-Foulkes polynomials (cf. \cite{Ma})
by establishing an elaborate bijection
between rigged configurations 
and other combinatorial objects.
Roughly speaking in the context of BBS, 
the bijection provides the 
direct and inverse scattering maps \cite{KOTY,KOSTY}
\begin{align}
\{\text{BBS states}\} \;\; \longleftrightarrow \;\;
\{\text{action-angle variables}\},
\label{k:sv}
\end{align}
which transform the nonlinear dynamics in BBS to a straight motion.
The $n$-tuple of Young diagrams form a label of the 
{\em iso-level sets} of BBS.
The Fermionic formula tells the {\em multiplicity} of a 
given iso-level set.

\subsection{Randomized BBS and main result}
Now let us embark on a randomized version of the story. 
We assume that some  
probability distribution on the set of BBS states has been introduced.
Then it is natural to ask; 

\begin{enumerate}

\item
What is the probability measure 
on the $n$-tuple of Young diagrams induced by (\ref{k:sv})?

\item
What is the limit shape of them when the system size $L$ of the BBS 
tends to infinity?

\end{enumerate}

\noindent
In this paper we answer (i) and (ii)  
for the most general BBS associated with 
the Kirillov-Reshetikhin (KR) crystals \cite{KMN} of 
the quantum affine algebra $U_q(A^{(1)}_n)$.
The randomness of the BBS states we will be concerned with is
the product of the one at each site. 
The latter is the probability distribution on a single KR 
crystal just proportional to $e^{\mathrm{wt}}$.
(See (\ref{noc}).)
Under this simple choice, the answer to (i) is given by 
the Fermionic form itself multiplied with the Boltzmann factor 
accounting for the $e^{\mathrm{wt}}$ contribution 
as a chemical potential term.
(See (\ref{prom}).)

The Fermionic form measure is quite distinct in nature from 
the well studied ones like 
the Plancherel measure for the symmetric group 
and/or  its Poissonized versions.
It fits an asymptotic analysis by
the {\em thermodynamic Bethe ansatz} (TBA) \cite{YY}.
The method employs the idea of the grand canonical ensemble and 
captures the equilibrium characteristics of the system by 
a variational principle.
The equilibrium condition shows up as the so called TBA equation.
It plays a central role together with the 
equation of state connecting the density of balls with fugacity.
Our TBA analysis is essentially the spectral parameter free 
version of \cite[sec.14]{KNS}. 
In particular the Y-system and the 
Q-system $(Q^{(a)}_i)^2 = Q^{(a)}_{i-1}Q^{(a)}_{i+1}
+ \prod_{b \sim a}Q^{(b)}_i$ for the character 
$Q^{(a)}_i$ of the KR module come into the game naturally.

It turns out that a proper scaling of the Young diagrams 
is to shrink them vertically by $1/L$. 
This feature will be established in \cite{KL} by invoking the large deviation principle. 
The resulting limit shape is described by 
the logarithmic derivative of the 
deformed character of the KR modules as
\begin{align}\label{kk:es5}
\lim_{L \rightarrow \infty}
\frac{1}{L}\bigl(\text{$\#$ of boxes in the left $i$ columns of 
$\mu_a$}\bigr)
= \frac{\partial \log(Q^{(a)}_i \ast Q^{(r)}_s)}{\partial w} 
\Big|_{w=1}.
\end{align}
See (\ref{k:sn}) and (\ref{k:xw}) for the definition of 
the deformed character $Q^{(a)}_i \ast Q^{(r)}_s$
for $(a,i), (r,s) \in [1,n]\times 
\Z_{\ge 1}$\footnote{In general $Q^{(k_1)}_{l_1} \ast \cdots \ast Q^{(k_m)}_{l_m}$ 
is a $w$-deformation of the product $Q^{(k_1)}_{l_1}\cdots Q^{(k_m)}_{l_m}$.  
Representation theoretically,
it is the character of a generalized Demazure module \cite{N}.}.
The data $(r,s)$ is specified according to the choice of 
the set of local states in the BBS\footnote{The original $n$-color BBS \cite{T93}
corresponds to the choice $(r,s)=(1,1)$.}.
The quantity (\ref{kk:es5}) 
coincides with the  
{\em stationary local energy} of a 
carrier in the relevant KR crystal derived in (\ref{k:ep}).
Independent variables in the deformed characters are 
linked to the prescribed fugacity of the BBS by the 
equation of state (\ref{est}) or equivalently (\ref{k:mhn}).
This general and intrinsic answer to the above question (ii)
is our main result in this paper.
Further concrete formulas are available 
for the simplest $n$-color BBS \cite{T93}
in terms of Schur functions in (\ref{k:mst}) and (\ref{k:eai}).
It will be interesting to investigate the results in this paper 
further in the light of recent results on the BBS from 
probabilistic viewpoints \cite{CKST, FNRW, KL, LLP}.

\subsection{Outline of the paper.}
In Section \ref{sec:bbs} we recall basic facts on generalized BBS 
necessary in this paper.
In Section \ref{sec:rbbs} we consider the BBS in a randomized setting.
It amounts to introducing a Markov process of carriers 
associated to each time evolution $T^{(a)}_i$.
We construct a stationary measure of the process quite generally 
by the character of the relevant KR modules (Proposition \ref{pr:st}).
It leads to the stationary local energy  (\ref{k:ep}) or equivalently 
(\ref{k:hhh}).
In Section \ref{sec:ff} we recall the Fermionic formula 
based on \cite{HKOTY,KSS} as a preparation for subsequent sections.
The deformed character in (\ref{k:xm}), (\ref{k:sn})
and its logarithmic derivative will be the building blocks  
in describing the limit shape.
Section \ref{sec:tba} is the main part of the paper.
We identify the Fermionic form with 
the probability measure on the $n$-tuple of 
Young diagrams induced from the randomized BBS via its conserved quantities.
By a TBA analysis, a difference equation characterizing the limit shape of 
the Young diagrams is derived.
Our main result is Theorem \ref{thk:main}, which 
identifies the solution to the difference equation 
with the stationary local energy obtained in Section \ref{sec:rbbs}.
It reveals a new connection between TBA and crystal theory 
via the limit shape problem.
In Section \ref{sec:ex} we deal with the simplest example corresponding to
the $n$-color BBS in \cite{T93}.
The scaled column length of the Young diagrams are 
given explicitly in terms of the Schur function involving the ball densities.
We check the result against the 
stationary local energy of a randomly generated BBS states numerically  
and confirm a good agreement. 
Section \ref{sec:dis} contains a summary and discussion.
We conjecturally describe 
the difference equation and its solution like Theorem \ref{thk:main}
uniformly for the BBS associated with 
the simply-laced quantum affine algebras 
$U_q(\hat{\mathfrak{g}})$
with $\hat{\mathfrak{g}} = A^{(1)}_n, D^{(1)}_n, E^{(1)}_{6,7,8}$.
We also suggest some future problems as concluding remarks.

Throughout the paper we use the notation 
$\theta(\text{true})=1, \theta(\text{false})=0$.

\section{Box-ball systems}\label{sec:bbs}

\subsection{KR crystals}\label{ss:kr}

Consider the classical simple Lie algebra of type 
$A_n$.
We denote its Cartan matrix by $(C_{ab})_{a,b=1}^n$, where
$C_{ab}=C_{ba} = 2\delta_{ab}-\theta(a\sim b)$ and 
$a \sim b$ means that the two nodes $a$ and $b$ are
connected by a bond in the Dynkin diagram, i.e. $|a-b|=1$. 
Let $\varpi_1,\ldots, \varpi_n$ be the fundamental weights and 
$\alpha_1, \ldots, \alpha_n$ be the simple roots.
They are related by $\alpha_a = \sum_{b=1}^n C_{ab}\varpi_b$.
We use the set of positive roots $\Delta_+$,
the weight lattice $P=\bigoplus_{a=1}^n \Z \varpi_a$, 
the root lattice $Q= \bigoplus_{a=1}^n \Z \alpha_a$ 
and their subsets 
$P_+ = \sum_{a=1}^n\Z_{\ge 0} \varpi_a$, 
$Q_+ = \sum_{a=1}^n \Z_{\ge 0} \alpha_a$.
Denote the irreducible module with 
highest weight $\lambda \in P_+$ by $V(\lambda)$ and 
its character by $\mathrm{ch}\,V(\lambda)$.
The latter belongs to 
$\Z[z_1^{\pm 1},\ldots, z_n^{\pm 1}]$ 
where $z_a = e^{\varpi_a}$.

Let $A^{(1)}_n$ be the non-twisted 
affinization of $A_n$ \cite{Kac} and 
$U_q= U_q(A^{(1)}_n)$ be the quantum affine algebra 
(without derivation operator) \cite{D,Ji}.
There is a family of irreducible finite-dimensional representations 
$\{W^{(r)}_s \mid (r,s) \in [1,n]\times \Z_{\ge 0}\}$ of 
$U_q$ called Kirillov-Reshetikhin (KR) module\footnote{The actual  
KR modules carries a spectral parameter. In this paper it is irrelevant and hence 
suppressed.} named after  
the related work on the Yangian \cite{KR2}.
As a representation of $A_n$, $W^{(r)}_s$ is isomorphic to $V(s\varpi_r)$.
$W^{(r)}_s$ is known to have a crystal base $B^{(r)}_s$ \cite{Ka,KMN}.
Roughly speaking, it is a set of basis vectors of a $U_q$-module at $q=0$.
$B^{(r)}_s$ is called a KR crystal. It
is identified with the set of semistandard tableaux
of rectangular shape $(s^r)$ with letters from $\{1,2,\ldots,n+1\}$.
The highest weight element of $B^{(r)}_s$, which is the tableau whose
$j$-th row is all $j$, is denoted by $u^{(r)}_s$.
For two crystals $B_1,B_2$ their tensor product $B_1\otimes B_2$ is
well defined, and as a set 
$B_1\otimes B_2=\{b_1\otimes b_2\mid b_1\in B_1,b_2\in B_2\}$.

Before explaining necessary ingredients related to KR crystals, 
we review a notion of tableau product $S\cdot T$ for two 
tableaux $S,T$. Let $row(T)$ be a row word of a tableau $T$. 
It is obtained by reading letters from bottom to top, left to right in 
each row. Let $row(T)=u_1u_2\cdots u_l$ and we apply to $S$ the row 
bumping algorithm \cite{Fu} successively as
\[
(\cdots((S\leftarrow u_1)\leftarrow u_2)\leftarrow\cdots)\leftarrow u_l.
\]
The resulting tableau is nothing but $S\cdot T$. Alternatively, let
$row(S)=v_1v_2\cdots v_m$ and apply to $T$ the column bumping algorithm
successively as
\[
v_1\rightarrow(v_2\rightarrow(\cdots(v_m\rightarrow T)\cdots)).
\]
The result also gives $S\cdot T$.

We are ready to review the combinatorial $R$ and the (local) energy $H$.
Let $b,c$ be elements of $B^{(a)}_i,B^{(r)}_s$ represented as tableaux.
The combinatorial $R$ is a bijection $R:B^{(a)}_i\otimes B^{(r)}_s
\rightarrow B^{(r)}_s\otimes B^{(a)}_i$ and the energy $H$
is a $\Z$-valued function on $B^{(a)}_i\otimes B^{(r)}_s$ determined 
by the following combinatorial rule \cite{Shi}. 
The image of $R$ is given in such a way
that $R(b\otimes c)=\tilde{c}\otimes\tilde{b}$ is equivalent to
$c\cdot b=\tilde{b}\cdot\tilde{c}$. The fact that for $c\cdot b$ there
is a unique such pair $(\tilde{b},\tilde{c})$ is assured since
the decomposition of the tensor product module 
$V(i\varpi_a)\otimes V(s\varpi_r)$ is multiplicity free.
The value $H(b\otimes c)(=H(\tilde{c}\otimes\tilde{b}))$ is defined
to be the number of nodes strictly 
below the $\max(a,r)$-th row of the tableau
$c\cdot b$. 
By definition, $H$ is nonnegative and $H(u^{(a)}_i\otimes u^{(r)}_s)=0$.
The combinatorial $R$ satisfies the Yang-Baxter equation (cf. \cite{Bax}).

\begin{example}
Consider the $A^{(1)}_3$ case. If $b=\tab{1&2&2&3&4&4}\in B^{(1)}_6,
c=\tab{1&2&2&4}\in B^{(1)}_4$, the tableau product $c\cdot b$, the image of $R$ 
and the value $H(b\otimes c)$ are given by
\[
\tab{1&1&2&2&2&3&4&4\\2&4},\quad
\tab{1&3&4&4}\otimes\tab{1&2&2&2&2&4},\quad
H=2.
\]
If $b=\tab{1&2&3\\2&4&4}\in B^{(2)}_3,c=\tab{1&1\\2&3\\3&4}\in B^{(3)}_2$, 
then $row(b)=214243, \,row(c)=321431$ and they are 
\[
\tab{1&1&1&2&3\\2&2&4&4\\3&3\\4},\quad
\tab{1&1\\2&3\\3&4}\otimes\tab{1&2&2\\3&4&4},\quad
H=1.
\]
\end{example}

\subsection{Deterministic box-ball system}

The original BBS was introduced in \cite{TS}. 
Since then it has been generalized from various viewpoints. 
One of such generalizations was done 
by using KR crystals as we describe below.

Suppose for $b\otimes c\in B^{(a)}_i\otimes B^{(r)}_s$ we have 
$R(b\otimes c)=\tilde{c}\otimes\tilde{b}$. We illustrate it by
\[
\begin{picture}(20,40)(-5,-20)
\put(0,0){\vector(1,0){20}}
\put(10,10){\vector(0,-1){20}}
\put(-7.6,-2){$b$}
\put(7.5,14){$c$}
\put(23,-2){$\tilde{b}$}
\put(7.5,-20){$\tilde{c}$}
\end{picture}
\]
Take a sufficiently large integer $L$ and consider 
$B^{(a)}_i\otimes (B^{(r)}_s)^{\otimes L}$.
Apply the combinatorial $R$ on the $j$-th and $(j+1)$-th component 
(from the left) successively for
$j=1,2,\ldots,L$ to the element $u^{(a)}_i\otimes b_1\otimes b_2\otimes\cdots\otimes b_L$
of $B^{(a)}_i\otimes (B^{(r)}_s)^{\otimes L}$. Let the output be
$b'_1\otimes b'_2\otimes\cdots\otimes b'_L\otimes c$ of 
$(B^{(r)}_s)^{\otimes L}\otimes B^{(a)}_i$. 
Graphically, it can be shown as below.
\begin{align}\label{k:tm}
\begin{picture}(160,40)(-5,-20)
\put(0,0){\vector(1,0){160}}
\multiput(10,10)(20,0){8}{\vector(0,-1){20}}
\put(-17,-2){$u^{(a)}_i$}
\put(7,15){$b_1$}
\put(7,-20){$b'_1$}
\put(27,15){$b_2$}
\put(27,-20){$b'_2$}
\put(147,15){$b_L$}
\put(147,-20){$b'_L$}
\put(164,-2){$c$}
\end{picture}
\end{align}
We call $B^{(a)}_i$ (and its elements) the {\em carrier} 
and $(B^{(r)}_s)^{\otimes L}$ the quantum space. 
The map $T^{(a)}_i:(B^{(r)}_s)^{\otimes L}\rightarrow (B^{(r)}_s)^{\otimes L}$ 
defined by $b_1\otimes b_2 \otimes \cdots \otimes b_L 
\mapsto b'_1\otimes b'_2 \otimes \cdots \otimes b'_L$ is 
called the time evolution operator. Since 
$R(u^{(a)}_i\otimes u^{(r)}_s)=u^{(r)}_s\otimes u^{(a)}_i$, we have 
$T^{(a)}_i((u^{(r)}_s)^{\otimes L})=(u^{(r)}_s)^{\otimes L}$ and $c=u^{(a)}_i$. 
Hence, 
$(u^{(r)}_s)^{\otimes L}$ can be thought of as the vacuum state. If $a=r$ and 
$b_j=u^{(r)}_s$ for $j\ge J$ with such $J$ that $L-J$ is sufficiently large, one
can conclude $c=u^{(a)}_i$. However, if $a\ne r$, $c$ is not always $u^{(a)}_i$,
which means that some particles may be snatched away by $T^{(a)}_i$. To 
prevent such situations, we extend the quantum space $(B^{(r)}_s)^{\otimes L}$
by tensoring $(B^{(a)}_1)^{\otimes L'}$ for sufficiently large $L'$ from right.
Then one can assume $c$ is always $u^{(a)}_i$. We call this extra tensor 
factor the barrier.

Next we recall the conserved quantity under the time evolution 
$T^{(a)}_i$ introduced
for $a=1$ in \cite{FOY} 
for the $n$-color BBS \cite{T93}. 
In order to make $c$ in (\ref{k:tm}) to be $u^{(a)}_i$ we 
attach the barrier if necessary and assume the number of the tensor factors
in the quantum space to be $L$. Define $c_j$ ($j=1,2,\ldots,L$) by
$c_0=u^{(a)}_i,R(c_{j-1}\otimes b_j)=b'_j\otimes c_j$.
The definition corresponds to setting 
the $j$-th vertex in the previous diagram as follows.
\[
\begin{picture}(20,45)(-5,-20)
\put(0,0){\vector(1,0){20}}
\put(10,10){\vector(0,-1){20}}
\put(-21,-2){$c_{j-1}$}
\put(7,14.5){$b_j$}
\put(23,-2){$c_j$}
\put(7,-19.5){$b'_j$}
\end{picture}
\]
We introduce the (row transfer matrix) {\em energy} by 
\begin{equation} \label{ok:Eai}
E^{(a)}_i(b_1\otimes b_2\otimes\cdots\otimes b_L)=\sum_{j=1}^LH(c_{j-1}\otimes b_j).
\end{equation}
One can show $E^{(a)}_i$ is preserved under the time evolution $T^{(a')}_{i'}$,
that is, $E^{(a)}_i(T^{(a')}_{i'}(b))=E^{(a)}_i(b)$ 
following the same argument as \cite{FOY} for $a=a'=1$.
Moreover, supplement of barriers does not change 
$E^{(k)}_l$ for any $k,l$.
We note that these features are valid even when the quantum space 
$(B^{(r)}_s)^{\otimes L}$ is replaced by 
the inhomogeneous one 
$B^{(r_1)}_{s_1} \otimes \cdots \otimes B^{(r_L)}_{s_L}$.

\begin{example}
We give examples of deterministic BBS for $A^{(1)}_3$.
The first one is the case when the carrier is $B^{(1)}_3$ and the
quantum space is $(B^{(1)}_1)^{\ot 13}$.
\[
\begin{picture}(450,40)(-5,-17)
\multiput(20,14)(35,0){13}{\line(0,-1){20}}
\multiput(10,4)(35,0){13}{\line(1,0){20}}
\put(-5,0){111}
\put(30,0){112}
\put(65,0){113}
\put(100,0){111}
\put(135,0){114}
\put(170,0){124}
\put(205,0){224}
\put(240,0){234}
\put(275,0){123}
\put(310,0){223}
\put(345,0){233}
\put(380,0){123}
\put(415,0){112}
\put(450,0){111}
\put(0,2){
\put(17,14){2}
\put(52,14){3}
\put(87,14){1}
\put(122,14){4}
\put(157,14){2}
\put(192,14){2}
\put(227,14){3}
\put(262,14){1}
\put(297,14){2}
\put(332,14){3}
\multiput(367,14)(35,0){3}{1}
}
\put(17,-15){1}
\put(52,-15){2}
\put(87,-15){3}
\put(122,-15){1}
\put(157,-15){1}
\put(192,-15){1}
\put(227,-15){2}
\put(262,-15){4}
\put(297,-15){1}
\put(332,-15){2}
\put(367,-15){3}
\put(402,-15){3}
\put(437,-15){2}
\end{picture}
\]
In general when the carrier is $B^{(1)}_i$ and the quantum space 
is $(B^{(1)}_1)^{\otimes L}$, 
the dynamics on the latter reproduces 
the ball-moving algorithm in the $n$-color BBS \cite{T93} as $i \rightarrow \infty$.

The next example is the case when the carrier is again $B^{(1)}_3$ but the
quantum space is $(B^{(1)}_2)^{\ot 10}$.
\[
\begin{picture}(450,40)(-35,-17)
\multiput(20,14)(35,0){10}{\line(0,-1){20}}
\multiput(10,4)(35,0){10}{\line(1,0){20}}
\put(-5,0){111}
\put(30,0){112}
\put(65,0){124}
\put(100,0){234}
\put(135,0){112}
\put(170,0){222}
\put(205,0){234}
\put(240,0){122}
\put(275,0){113}
\put(310,0){111}
\put(345,0){111}
\put(0,2){
\put(15,14){12}
\put(50,14){24}
\put(85,14){23}
\put(120,14){11}
\put(155,14){22}
\put(190,14){34}
\put(225,14){12}
\put(260,14){13}
\put(295,14){11}
\put(330,14){11}
}
\put(15,-15){11}
\put(50,-15){12}
\put(85,-15){12}
\put(120,-15){34}
\put(155,-15){11}
\put(190,-15){22}
\put(225,-15){34}
\put(260,-15){22}
\put(295,-15){13}
\put(330,-15){11}
\end{picture}
\]
In general choosing the quantum space as 
$(B^{(1)}_s)^{\otimes L}$ corresponds to the boxes 
with capacity $s$.

The last example is the case when the carrier is $B^{(3)}_3$ and the
quantum space is $(B^{(2)}_2)^{\ot 7}$ with the barrier $(B^{(3)}_1)^{\ot 3}$.
\[
\begin{picture}(450,61)(-35,-25)
\multiput(21,14)(36,0){10}{\line(0,-1){20}}
\multiput(11,4)(36,0){10}{\line(1,0){20}}
\put(-4.7,7){\small 111} \put(-4.7,0){\small 222} \put(-4.7,-7){\small 333}
\put(31.8,7){\small 112} \put(31.8,0){\small 223} \put(31.8,-7){\small 334}
\put(67.8,7){\small 122} \put(67.8,0){\small 233} \put(67.8,-7){\small 344}
\put(103.8,7){\small 111} \put(103.8,0){\small 223} \put(103.8,-7){\small 344}
\put(139.8,7){\small 111} \put(139.8,0){\small 222} \put(139.8,-7){\small 344}
\put(175.8,7){\small 112} \put(175.8,0){\small 223} \put(175.8,-7){\small 334}
\put(211.8,7){\small 222} \put(211.8,0){\small 333} \put(211.8,-7){\small 444}
\put(247.8,7){\small 112} \put(247.8,0){\small 223} \put(247.8,-7){\small 444}
\put(283.8,7){\small 111} \put(283.8,0){\small 222} \put(283.8,-7){\small 344}
\put(319.8,7){\small 111} \put(319.8,0){\small 222} \put(319.8,-7){\small 334}
\put(355.8,7){\small 111} \put(355.8,0){\small 222} \put(355.8,-7){\small 333}
\put(16,21){\small 13} \put(16,14){\small 24}
\put(52,21){\small 22} \put(52,14){\small 34}
\put(88,21){\small 11} \put(88,14){\small 23}
\put(124,21){\small 11} \put(124,14){\small 22}
\put(160,21){\small 23} \put(160,14){\small 34}
\put(196,21){\small 22} \put(196,14){\small 44}
\put(232,21){\small 11} \put(232,14){\small 22}
\put(271,28){\small 1}\put(271,21){\small 2}\put(271,14){\small 3}
\put(307,28){\small 1}\put(307,21){\small 2}\put(307,14){\small 3}
\put(343,28){\small 1}\put(343,21){\small 2}\put(343,14){\small 3}
\put(16,-15){\small 11} \put(16,-22){\small 23}
\put(52,-15){\small 12} \put(52,-22){\small 23}
\put(88,-15){\small 22} \put(88,-22){\small 33}
\put(124,-15){\small 11} \put(124,-22){\small 23}
\put(160,-15){\small 12} \put(160,-22){\small 44}
\put(196,-15){\small 11} \put(196,-22){\small 22}
\put(232,-15){\small 22} \put(232,-22){\small 33}
\put(271,-15){\small 2}\put(271,-22){\small 3}\put(271,-29){\small 4}
\put(307,-15){\small 1}\put(307,-22){\small 2}\put(307,-29){\small 4}
\put(343,-15){\small 1}\put(343,-22){\small 2}\put(343,-29){\small 4}
\end{picture}
\]
This is the most general situation.
Local states and carriers are no longer simple boxes but 
possess a structure of a {\em shelf} with a nontrivial constraint 
on the arrangement of balls from the 
semistandard condition of the tableaux.

\end{example}

Introduction of carriers \cite{TM} as a hidden dynamical variable of BBS 
was a corner stone in the development of the theory.
It provided the apparently nonlocal ball moving algorithm 
with a {\em local} description encoded in a {\em single} vertex 
in the above diagrams.
A further discovery that these vertices are nothing but the 
combinatorial $R$ unveiled the nature of BBS as solvable vertex models \cite{Bax}
at $q=0$, where time evolutions are naturally identified with 
commuting row transfer matrices \cite{HHIKTT,FOY}.
As we will see in Section \ref{sec:rbbs},
carriers also play a fundamental role in the randomized BBS
via their {\em Markov processes}.

\subsection{Rigged configuration as action angle variables}\label{ss:eca}

Here we review a combinatorial object called rigged configuration and see how it is
used to linearize the BBS dynamics. 
Rigged configurations are defined based on data
$\{(k_j,l_j)\}_{1\le j\le L}$ such that $(k_j,l_j)\in [1,n]\times\Z_{\ge1}$. 
Through the 
Kirillov-Schilling-Shimozono (KSS) bijection which we discuss later, it is related to
the tensor product of KR crystals 
$B^{(k_1)}_{l_1}\otimes\cdots\otimes B^{(k_L)}_{l_L}$.
A rigged configuration consists of a configuration, an $n$-tuple of Young diagrams
$\mu_1,\ldots,\mu_n$, and riggings, sequence of nonnegative integers 
attached to each row of $\mu_a$ for $a\in [1,n]$. 
Let $m^{(a)}_i$ be the number of rows of length $i$ in $\mu_a$.
Define 
$e^{(a)}_i$ and {\em vacancy} $p^{(a)}_i$ by 
\begin{align}
p^{(a)}_i  &=  \sum_{j=1}^L \delta_{a,k_j}\min(i,l_j)
-\sum_{b=1}^nC_{ab}e^{(b)}_i,
\label{ok:pai}\\
e^{(a)}_i &= \sum_{j \ge 1}\min(i,j)m^{(a)}_j.
\label{ok:eai}
\end{align}
A configuration is required to satisfy $p^{(a)}_i\ge0$ for any 
$(a,i)\in [1,n]\times\Z_{\ge1}$ and riggings of the rows of length $i$ in $\mu_a$
not to exceed $p^{(a)}_i$. Among riggings of the rows of the same length in $\mu_a$,
the order does not matter. So we label riggings in non increasing order when going
downwards. 
From these definitions one can  immediately write down the 
number of the rigged configurations with the 
prescribed configuration $\mu_1,\ldots,\mu_n$ as
\begin{align}\label{k:be}
\prod_{1 \le a \le n, i\ge 1}
\left({p^{(a)}_i +  m^{(a)}_i \atop m^{(a)}_i}\right).
\end{align}
This is an ultimate generalization of the celebrated Bethe formula 
\cite[eq.(45)]{Be} due to \cite{KKR, KR1,KSS} for type $A^{(1)}_n$.
See \cite[sec.1]{HKOTT} for a historical account and  
\cite[sec.13]{KNS} for a concise review.
We will come back to this {\em Fermionic form} as the main object of the 
TBA analysis in Section \ref{sec:tba}.

The KSS bijection \cite{KSS} gives an algorithm to construct
an element of the tensor product of KR crystals 
$B=B^{(k_1)}_{l_1}\otimes\cdots\otimes B^{(k_L)}_{l_L}$
from a rigged configuration. The image of this bijection consists of special elements
which we call {\em highest} states. Representation theoretically, they correspond to
highest weight vectors of $B$. It is equivalent to saying that the tableau product 
$b_L\cdot\cdots\cdot b_1$ ($b_j\in B^{(k_j)}_{l_j})$ is a tableau such that letters 
in the $i$-th row are all $i$. 

The KSS bijection 
separates the BBS states into action and angle variables. 
It is known \cite{KOSTY} that if $b$ is a highest state, then the application
of $T^{(a)}_i$ causes, in the rigged configuration side, the increase of riggings
by $\delta_{ac}\min(i,l)$ when they are attached to the length $l$ row of $\mu_c$.

Identifying rigged configurations originating in the Bethe ansatz \cite{Be}
with action-angle variables of BBS 
implies a correspondence between Bethe strings in the former and 
solitons in the latter. 
This is natural as we will also comment in the end of 
Section \ref{ss:ff}.
As far as the action variables are concerned,
this {\em soliton/string correspondence} \cite{KOTY, KOSTY} 
is quantified most generally 
as \cite{Sa}
\begin{align}\label{k:ii}
E^{(a)}_i = e^{(a)}_i.
\end{align}
Remember that the LHS is the row transfer matrix energy 
in \eqref{ok:Eai}, which was indeed known (for $a=1$) to measure 
the amplitude of solitons \cite{FOY} for the original 
$n$-color BBS \cite{T93}.
The RHS is defined by \eqref{ok:eai} from the rigged configuration 
which is essentially an assembly of Bethe strings \cite{Be,KKR,KR1}.
Thus the LHS and the RHS in (\ref{k:ii})
are referring to solitons and strings, respectively.
Our main result Theorem \ref{thk:main} in this paper 
may be regarded as a generalization of (\ref{k:ii}) 
to a randomized situation.

\begin{example}
We give examples of the KSS bijection for $A^{(1)}_3$. An element of
$(B^{(1)}_1)^{\ot20}$
\[
1\ot2\ot1\ot2\ot3\ot4\ot1\ot1\ot3\ot2\ot1\ot1\ot2\ot3\ot2\ot1\ot
3\ot4\ot4\ot1
\]
is a highest state which corresponds to the rigged configuration below.
The numbers left to the Young diagram are vacancies. 
$p^{(1)}_2=3, p^{(1)}_1=6, p^{(2)}_2=1, p^{(2)}_1=3, 
p^{(3)}_2=1, p^{(3)}_1=0$.
\[
\unitlength 10pt
\begin{picture}(30,9)(-8,-2)
\Yboxdim{10pt}
\put(1,-2){\yng(2,2,2,1,1,1,1,1,1)}
\put(0.2,6){3}
\put(0.2,3){6}
\put(3.2,6){2}
\put(3.2,5){1}
\put(3.2,4){1}
\put(2.2,3){4}
\put(2.2,2){3}
\multiput(2.2,1)(0,-1){4}{0}
\put(7,3){\yng(2,2,2,1)}
\put(6.2,6){1}
\put(6.2,3){3}
\put(9.2,6){1}
\put(9.2,5){1}
\put(9.2,4){0}
\put(8.2,3){1}
\put(13,5){\yng(2,1)}
\put(12.2,6){1}
\put(12.2,5){0}
\put(15.2,6){1}
\put(14.2,5){0}
\end{picture}
\]
Similarly, 
\[
\tab{1&1\\2&2}\ot\tab{1&1\\2&3}\ot\tab{1&2\\3&4}\ot\tab{1&1\\3&3}
\ot\tab{1&1\\2&2}\ot\tab{1&1\\2&4}\ot\tab{1&1\\2&2}\ot\tab{2&3\\4&4}
\ot\tab{1&2\\3&3}\ot\tab{1&3\\2&4}
\]
is a highest state of $(B^{(2)}_2)^{\ot10}$ which corresponds to the following one.
\[
\unitlength 10pt
\begin{picture}(30,6)(-5,0)
\Yboxdim{10pt}
\put(1,4){\yng(4,1)}
\put(0.2,5){3}
\put(0.2,4){2}
\put(5.2,5){2}
\put(2.2,4){0}
\put(9,0){\yng(5,4,1,1,1,1)}
\put(8.2,5){4}
\put(8.2,4){6}
\put(8.2,3){2}
\put(14.2,5){3}
\put(13.2,4){0}
\multiput(10.2,3)(0,-1){3}{2}
\put(10.2,0){0}
\put(18,4){\yng(3,2)}
\put(17.2,5){0}
\put(17.2,4){0}
\put(21.2,5){0}
\put(20.2,4){0}
\end{picture}
\]
\end{example}

\section{Randomized box-ball system}\label{sec:rbbs}

\subsection{Markov process of carrier}
Now we introduce a randomized version of BBS.
Let 
$\pi^{(r)}_s: B^{(r)}_s \rightarrow \R_{>0}$ be a probability distribution
meaning that $\sum_{b \in B^{(r)}_s}\pi^{(r)}_s(b) = 1$.
We consider the ensemble of the states of the BBS on 
$(B^{(r)}_s)^{\otimes L}$ in which the local states 
are independent and identically distributed (i.i.d.)
according to $\pi^{(r)}_s$.
Taking them as the initial condition, we apply 
a time evolution $T^{(a)}_i$.
The carrier from $B^{(a)}_i$ proceeds to the right 
interacting with the random local states 
successively by the combinatorial $R$ on $B^{(a)}_i \otimes B^{(r)}_s$.
The bombardment by the random local states induces a stochastic process 
of the carrier. 
It is the Markov process on $B^{(a)}_i$ whose transition rate is 
specified as
\begin{align}\label{k:mp}
\mathrm{Rate}(u \rightarrow u') = 
\sum_{b,b' \in B^{(r)}_s, R(u\otimes b) = b' \otimes u'}
\pi^{(r)}_s(b)\qquad
(u, u' \in B^{(a)}_i).
\end{align} 
The condition on the sum is depicted as a vertex in (\ref{k:tm}) as
\[
\begin{picture}(20,40)(-5,-20)
\put(0,0){\vector(1,0){20}}
\put(10,10){\vector(0,-1){20}}
\put(-8.5,-2){$u$}
\put(7.5,13.5){$b$}
\put(23,-2){$u'$}
\put(7.5,-20){$b'$}
\end{picture}
\]
We have 
$\sum_{u' \in B^{(a)}_i}\mathrm{Rate}(u \rightarrow u')= 
\sum_{b \in B^{(r)}_s}\pi^{(r)}_s(b) = 1$ indeed.
In \cite{KL} it has been shown that this Markov process is irreducible and 
has the unique stationary measure for $r=s=1$. 
We conjecture and assume the irreducibility for general $r,s$ in what follows. 
Denote the resulting stationary measure by 
$\tilde{\pi}^{(a)}_i: B^{(a)}_i \rightarrow \R_{>0}$.

\begin{example}\label{ex:ca}
Consider $A^{(1)}_2$ BBS with local states from  
$B^{(r)}_s=B^{(1)}_1=\{1,2,3\}$ 
and carrier from $B^{(a)}_i=B^{(1)}_2=\{11, 12, 13, 22, 23, 33\}$.
Take  
$\pi^{(r)}_s: B^{(r)}_s \rightarrow \R_{>0}$
as $\pi^{(1)}_1(b)= p_b \,(b=1,2,3)$ with some $p_b$ obeying
$p_1+p_2+p_3=1$.
The combinatorial $R$ is given by
\begin{align*}
11 \otimes 1 & \mapsto 1 \otimes 11
\qquad
11 \otimes 2 \mapsto 1 \otimes 12
\qquad
11 \otimes 3 \mapsto 1 \otimes 13,\\
12 \otimes 1 & \mapsto 2 \otimes 11
\qquad
12 \otimes 2 \mapsto 1 \otimes 22
\qquad
12 \otimes 3 \mapsto 2 \otimes 13,\\
13 \otimes 1 & \mapsto 3 \otimes 11
\qquad
13 \otimes 2 \mapsto 1 \otimes 23
\qquad
13 \otimes 3 \mapsto 1 \otimes 33,\\
22 \otimes 1 & \mapsto 2 \otimes 12
\qquad
22 \otimes 2 \mapsto 2 \otimes 22
\qquad
22 \otimes 3 \mapsto 2 \otimes 23,\\
23 \otimes 1 & \mapsto 3 \otimes 12
\qquad
23 \otimes 2 \mapsto 3 \otimes 22
\qquad
23 \otimes 3 \mapsto 2 \otimes 33,\\
33 \otimes 1 & \mapsto 3 \otimes 13
\qquad
33 \otimes 2 \mapsto 3 \otimes 23
\qquad
33 \otimes 3 \mapsto 3 \otimes 33.
\end{align*}
By the definition the transitions in the first, second and the third column
happen with probabilities $p_1, p_2, p_3$, respectively.
Thus denoting $\tilde{\pi}^{(1)}_2(23)=\tilde{\pi}_{23}$ etc for short, the 
stationary condition reads
\begin{align*}
\tilde{\pi}_{11} &= 
p_1 \tilde{\pi}_{11} +
p_1 \tilde{\pi}_{12} +
p_1 \tilde{\pi}_{13},\\
\tilde{\pi}_{12} &= 
p_1 \tilde{\pi}_{22} +
p_1 \tilde{\pi}_{23} +
p_2 \tilde{\pi}_{11},\\
\tilde{\pi}_{13} &= 
p_1 \tilde{\pi}_{33} +
p_3 \tilde{\pi}_{11} +
p_3 \tilde{\pi}_{12},\\
\tilde{\pi}_{22} &= 
p_2 \tilde{\pi}_{12} +
p_2 \tilde{\pi}_{22} +
p_2 \tilde{\pi}_{23},\\
\tilde{\pi}_{23} &= 
p_2 \tilde{\pi}_{13} +
p_2 \tilde{\pi}_{33} +
p_3 \tilde{\pi}_{22},\\
\tilde{\pi}_{33} &= 
p_3 \tilde{\pi}_{13} +
p_3 \tilde{\pi}_{23} +
p_3 \tilde{\pi}_{33}.
\end{align*} 
Under the assumption $p_1+p_2+p_3=1$, these equations admit
a unique solution $(\tilde{\pi}_{ij})$ such that
$\sum_{1\le i \le j \le 3}\tilde{\pi}_{ij}=1$.
For instance let us parametrize $p_i$ as
\begin{align}\label{k:ps}
p_1 = \frac{z_1^2z_2}{z_1+z_1^2z_2+z_2^2},\quad
p_2 = \frac{z_2^2}{z_1+z_1^2z_2+z_2^2},\quad
p_3 = \frac{z_1}{z_1+z_1^2z_2+z_2^2}.
\end{align}
Then $\tilde{\pi}_{ij}$ is given by
\begin{align}
\tilde{\pi}_{11} &= \frac{z_1^4z_2^2}{Q},\quad
\tilde{\pi}_{12} = \frac{z_1^2z_2^3}{Q},\quad
\tilde{\pi}_{13} = \frac{z_1^3z_2}{Q},
\quad
\tilde{\pi}_{22} = \frac{z_2^4}{Q},\quad
\tilde{\pi}_{23} = \frac{z_1z_2^2}{Q},\quad
\tilde{\pi}_{33} = \frac{z_1^2}{Q}
\end{align}
with $Q = z_1^2+z_1^3 z_2 + z_1z_2^2+ z_1^4 z_2^2+z_1^2 z_2^3+z_2^4$.

\end{example}

\subsection{Stationary measure of carrier}

To generalize Example \ref{ex:ca} is straightforward.
The stationary measure $\tilde{\pi}^{(a)}_i$ is determined by the following 
stationary condition of the carrier process:
\begin{align}\label{stc}
\tilde{\pi}^{(a)}_i(u')
= \sum_{u \in B^{(a)}_i}\tilde{\pi}^{(a)}_i(u)
\mathrm{Rate}(u \rightarrow u') = 
\sum_{u \otimes b \in B^{(a)}_i \otimes B^{(r)}_s, 
R(u \otimes b) \in B^{(r)}_s \otimes u'}
\tilde{\pi}^{(a)}_i(u)\pi^{(r)}_s(b),
\end{align}
where the latter equality follows from (\ref{k:mp}).
The carrier and local states are taken from 
$B^{(a)}_i$ and $B^{(r)}_s$, respectively.

For a partition 
$\lambda=(\lambda_1,\lambda_2,\ldots, \lambda_{n+1})$, 
let $s_\lambda(w_1,\ldots, w_{n+1})$ denote the 
associated Schur polynomial \cite{Ma}:
\begin{align}
s_\lambda(w_1,\ldots, w_{n+1})
= \frac{\det(w_k^{\lambda_j+n+1-j})_{j,k=1}^{n+1}}
{\det(w_k^{n+1-j})_{j,k=1}^{n+1}}.
\label{k:sc}
\end{align}
This is the well-known Weyl formula
for the character $\mathrm{ch}V(\lambda)$
under the identification of $\lambda$ with 
$\sum_{i=1}^n(\lambda_i-\lambda_{i+1})\varpi_i \in P_+$.
We use a special notation when $\lambda$ is a rectangle.
\begin{align}
&Q^{(a)}_i  = \mathrm{ch}\, V(i \varpi_a) 
= \sum_{b \in B^{(a)}_i}e^{\mathrm{wt}(b)}
= s_{(i^a)}(w_1,\ldots, w_{n+1}),
\label{qf}\\
&w_j = z^{-1}_{j-1}z_j\;\; (1 \le j \le n+1),\qquad z_0=z_{n+1}=1.
\label{k:wde}
\end{align}

The following proposition gives an explicit expression for the 
stationary measure $\tilde{\pi}^{(a)}_i$ under a particular choice of the
measure $\pi^{(r)}_s$ for the local states. 
\begin{proposition}\label{pr:st}
The choice 
\begin{align}\label{noc}
\tilde{\pi}^{(a)}_i(u) 
&= \pi^{(a)}_i(u) = \frac{\exp\mathrm{wt}(u)}{Q^{(a)}_i}
\in \R(z_1,\ldots, z_n)\qquad (u \in B^{(a)}_i)
\end{align}
for any $(a,i) \in [1,n]\times \Z_{\ge 1}$
satisfies the stationary condition (\ref{stc})
and the normalization condition
$\sum_{u \in B^{(a)}_i}\pi^{(a)}_i(u) = 1$.
\end{proposition}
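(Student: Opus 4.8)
The plan is to verify the two asserted conditions directly, using two structural properties of the combinatorial $R$ recalled in Section~\ref{ss:kr}: that $R$ is a bijection $B^{(a)}_i\otimes B^{(r)}_s\to B^{(r)}_s\otimes B^{(a)}_i$, and that it preserves weight, i.e.\ $\mathrm{wt}(u)+\mathrm{wt}(b)=\mathrm{wt}(b')+\mathrm{wt}(u')$ whenever $R(u\otimes b)=b'\otimes u'$ (a consequence of the defining relation $c\cdot b=\tilde b\cdot\tilde c$, since the tableau product merely permutes the multiset of entries and hence preserves content). The normalization is immediate: summing \eqref{noc} over $u\in B^{(a)}_i$ gives $\sum_u e^{\mathrm{wt}(u)}/Q^{(a)}_i=1$ by the defining identity $Q^{(a)}_i=\sum_{b\in B^{(a)}_i}e^{\mathrm{wt}(b)}$ in \eqref{qf}.

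For the stationarity \eqref{stc}, I would substitute the ansatz \eqref{noc} for both $\tilde{\pi}^{(a)}_i$ and the local measure $\pi^{(r)}_s$ into the right-hand side. For each term in the sum, writing $R(u\otimes b)=b'\otimes u'$, weight preservation gives
\[
\frac{e^{\mathrm{wt}(u)}}{Q^{(a)}_i}\cdot\frac{e^{\mathrm{wt}(b)}}{Q^{(r)}_s}=\frac{e^{\mathrm{wt}(u')}\,e^{\mathrm{wt}(b')}}{Q^{(a)}_i\,Q^{(r)}_s},
\]
so the factor $e^{\mathrm{wt}(u')}/(Q^{(a)}_i Q^{(r)}_s)$ pulls out of the entire sum, leaving only the residual sum $\sum e^{\mathrm{wt}(b')}$ over the constraint set.

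The key step is evaluating that residual sum. For fixed $u'$, the summation set $\{(u,b): R(u\otimes b)\in B^{(r)}_s\otimes u'\}$ is exactly the $R$-preimage of $\{b'\otimes u' : b'\in B^{(r)}_s\}$; since $R$ is a bijection, the assignment $(u,b)\mapsto b'$ runs over each element of $B^{(r)}_s$ once and only once. Hence $\sum e^{\mathrm{wt}(b')}=\sum_{b'\in B^{(r)}_s}e^{\mathrm{wt}(b')}=Q^{(r)}_s$, and the right-hand side of \eqref{stc} collapses to $e^{\mathrm{wt}(u')}/Q^{(a)}_i=\pi^{(a)}_i(u')$, as required.

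The computation is short once these two properties are in hand; the only point demanding care is the bijective bookkeeping in the residual sum, namely confirming that, as $(u,b)$ ranges over the constraint set for fixed $u'$, the companion letter $b'$ sweeps out all of $B^{(r)}_s$ exactly once. This is precisely where bijectivity of the combinatorial $R$ enters, and it is the mechanism that makes the Boltzmann factors telescope. Uniqueness of the stationary measure is not needed for the statement as phrased (only that \eqref{noc} is \emph{a} stationary solution), but it is guaranteed by the irreducibility assumed just before the proposition.
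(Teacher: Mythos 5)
Your proof is correct and follows essentially the same route as the paper's: both arguments substitute the ansatz, use that the combinatorial $R$ is a weight-preserving bijection to convert the constrained sum over $(u,b)$ with fixed $u'$ into a free sum of $e^{\mathrm{wt}(b')}$ over $b'\in B^{(r)}_s$, and then cancel $Q^{(r)}_s$. Your explicit justification of weight preservation via content-invariance of the tableau product is a small added detail the paper leaves implicit, but the mechanism is identical.
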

\begin{proof}
Consider the combinatorial $R$
\begin{align}
R:\;\; B^{(a)}_i \otimes B^{(r)}_s &\rightarrow B^{(r)}_s \otimes B^{(a)}_i\\
 u \otimes b &\mapsto b' \otimes u'.
\end{align}
Take the $\exp \mathrm{wt}$ of the both sides and sum over $b' \in B^{(r)}_s$ fixing 
$u' \in B^{(a)}_i$.
The result reads
\begin{align}
Q^{(a)}_iQ^{(r)}_s\sum_{u \otimes b \in B^{(a)}_i \otimes B^{(r)}_s, 
R(u \otimes b) \in B^{(r)}_s \otimes u'}
\pi^{(a)}_i(u)
\pi^{(r)}_s(b) = Q^{(r)}_s \exp\mathrm{wt}(u').
\end{align}
This yields (\ref{stc}) with $\tilde{\pi}^{(a)}_i = \pi^{(a)}_i$
by dividing by $Q^{(a)}_iQ^{(r)}_s$.
The normalization condition is obvious from (\ref{qf}).
\end{proof}

Proposition \ref{pr:st} tells that as long as the randomness 
$\pi^{(r)}_s$ of the local states are taken to be proportional to 
$e^{\mathrm{wt}}$ as in (\ref{noc}), 
the stationary measure $\tilde{\pi}^{(a)}_i$ of the carrier for $T^{(a)}_i$ 
is independent of the choice of the KR crystal $B^{(r)}_s$, and 
it is also given by the same formula.
This is a reminiscent of the integrability 
of the original (deterministic) BBS.  
For $A^{(1)}_2$, the KR module $W^{(1)}_s$ is isomorphic to the 
degree $s$ symmetric tensor representation of $sl_3$, and  
(\ref{noc}) indeed reproduces Example \ref{ex:ca} if $z_a$ is identified with 
$e^{\varpi_a}$.

\subsection{Stationary local energy}\label{ss:sle}

In view of Proposition \ref{pr:st} we write 
the probability distribution $\tilde{\pi}^{(a)}_i$ of the carrier 
also as $\pi^{(a)}_i$ from now on.
It contains $n$ real positive parameters.

Now let us calculate the row transfer matrix energy per site 
in the stationary state:
\begin{align}\label{k:hai}
h^{(a)}_i = \lim_{L \rightarrow \infty}\frac{1}{L}E^{(a)}_i.
\end{align} 
See (\ref{ok:Eai}) for the definition of $E^{(a)}_i$.
On average 
the carrier $x \in B^{(a)}_i$ and a local state $y \in B^{(r)}_s$
arrive at a vertex with the probability 
$\pi^{(a)}_i(x)\pi^{(r)}_s(y)$.
Their encounter produces the local energy $H(x \otimes y)$.
Thus we have
\begin{align}\label{k:ep}
h^{(a)}_i = 
\sum_{x \otimes y \in B^{(a)}_i\otimes B^{(r)}_s}
H(x\otimes y)\pi^{(a)}_i(x)\pi^{(r)}_s(y).
\end{align}
We call this {\em stationary local energy} for the carrier 
from $B^{(a)}_i$ or for the time evolution $T^{(a)}_i$.
Note that $h^{(a)}_i$ does depend on the choice of the set of local states 
$B^{(r)}_s$ although it is suppressed in the notation.  

To write (\ref{k:ep}) more concretely,
consider the irreducible decomposition of the tensor product
$V(i\varpi_a)\otimes V(s\varpi_r)$.
It is multiplicity free as noted in Section \ref{ss:kr}, and 
results in the identity of the Schur functions as 
\begin{align*}
s_{(i^a)}(w_1,\ldots,w_{n+1})
s_{(s^r)}(w_1,\ldots,w_{n+1})
= \sum_{\nu \in \mathcal{P}^{(a,r)}_{i,s}}s_{\nu}(w_1,\ldots,w_{n+1}).
\end{align*}
Here $\mathcal{P}^{(a,r)}_{i,s}$ denotes the set of partitions
(Young diagrams) labeling the irreducible components described by 
the Littlewood-Richardson rule.
Concretely one has
\[
\mathcal{P}^{(a,r)}_{i,s} 
= \{\nu=(\nu_i): \text{partition}\mid
\ell(\nu)\le \min(n+1,a+r), \, \nu \supseteq (i^a),\,  
\nu \supseteq (s^r), \, |\nu| = ia+sr\},
\]
where $\ell(\nu)$ denotes the length of the partition $\nu$.
From the description of the local energy $H$ in Section \ref{ss:kr},
the result (\ref{k:ep}) is expressed as
\begin{align}\label{k:hhh}
h^{(a)}_i = 
\frac{\sum_{\nu \in \mathcal{P}^{(a,r)}_{i,s}}
(\sum_{j > \max(a,r)}\nu_j)s_{\nu}(w_1,\ldots,w_{n+1})}
{s_{(i^a)}(w_1,\ldots, w_{n+1})s_{(s^r)}(w_1,\ldots, w_{n+1})}.
\end{align}

\begin{example}\label{ex:h}
Consider the case $r=1$ for the set $B^{(r)}_s$ of local states.
Then
$\mathcal{P}^{(a,1)}_{i,s} = \{
(i+s-k,i^{a-1},k)\mid k \in [0,\min(i,s)]\}$
for $x \otimes y \in B^{(a)}_i\otimes B^{(1)}_s$.
The local energy takes the value $H(x\otimes y) =k$.
Thus $h^{(a)}_i$ (\ref{k:hhh}) reads as 
\begin{align}
h^{(a)}_i = \frac{\sum_{k=1}^{\min(i,s)}
k\, s_{(i+s-k,i^{a-1},k)}(w_1,\ldots, w_{n+1})}
{s_{(i^a)}(w_1,\ldots, w_{n+1})s_{(s)}(w_1,\ldots, w_{n+1})}.
\end{align}
We will use this formula with $s=1$ in Section \ref{sec:ex}.
\end{example}

\section{Fermionic form}\label{sec:ff}

\subsection{Deformed character}
Given a tensor product
$B = B^{(k_1)}_{l_1} \otimes \cdots \otimes B^{(k_L)}_{l_L}$,
we set
\begin{align}
\chi_w(B) &= \sum_{b_1 \otimes \cdots \otimes b_L \in B}
w^{D(b)}e^{\mathrm{wt}(b)}
\in \Z_{\ge 0}[z^{\pm 1}_1,\ldots, z^{\pm 1}_n, w],
\label{k:xw}\\
D(b_1 \otimes \cdots \otimes b_L)  &= 
\sum_{1 \le i < j \le L}H(b_i \otimes b^{(i+1)}_j) \in \Z_{\ge 0}.
\label{k:d}
\end{align}
Here $w$ is a parameter having nothing to do with 
$w_1, \ldots, w_{n+1}$ in (\ref{k:wde}).
The element $b^{(r)}_j \in B^{(k_j)}_{l_j}\, (j \ge r)$ is the one 
occurring at the position $r$ by  
sending $b_j \in B^{(k_j)}_{l_j}$ to the left by successively applying 
the combinatorial $R$ as
\begin{equation}\label{brj}
\begin{split}
b_r \otimes b_{r+1} \otimes \cdots 
 \otimes b_{j-1} \otimes b_j
&\simeq 
b_r  \otimes b_{r+1} \otimes \cdots 
\otimes b^{(j-1)}_{j} \otimes b'_{j-1}\simeq \cdots\\
&\simeq 
b_r \otimes b^{(r+1)}_j \otimes\cdots 
\otimes b'_{j-2} \otimes b'_{j-1}\\
&\simeq 
b^{(r)}_j  \otimes b'_{r} \otimes\cdots 
\otimes b'_{j-2} \otimes b'_{j-1}.
\end{split}
\end{equation} 
In particular we set $b^{(r)}_r=b_r$.
The procedure is depicted as

\begin{picture}(200,75)(-180,-33)

\put(-18,22){$b_r$} 
\put(5,22){$b_{r+1}$}\put(31,22){$\cdots$}
\put(53,22){$b_{j-2}$}
\put(80,22){$b_{j-1}$}\put(105,-3){$b_j$}

\put(-18,-27){$b'_r$} 
\put(5,-27){$b'_{r+1}$}\put(31,-27){$\cdots$}
\put(53,-27){$b'_{j-2}$}
\put(81,-27){$b'_{j-1}$}\put(-54,-3){$b^{(r)}_j$}

\put(100,0){\vector(-1,0){137}}
\multiput(-14,-1)(25,0){2}{
\put(0,15){\vector(0,-1){30}}
}
\multiput(61,-1)(25,0){2}{
\put(0,15){\vector(0,-1){30}}
}
\end{picture}

In contrast to the energy associated with the row transfer matrices 
(\ref{ok:Eai}), 
the quantity $D$ in (\ref{k:d}) corresponds to the energy 
of a {\em corner transfer matrix}\footnote{For types other than $A_n$,
one needs to add boundary energy. See (\ref{k:db}).}, 
which goes back to \cite[chap.13]{Bax}.
In fact, using the Yang-Baxter equation for the combinatorial $R$
it can be identified with 
the sum of the local energy associated to all the $L(L-1)/2$ 
vertices in the following diagram ($L=3$ example).

\begin{picture}(130,70)(-130,37)

\put(0,-9){
\put(100,100){$b_1$}
\put(100,80){$b_2$}
\put(100,60){$b_3$}
}

\put(95,95){\line(-1,0){10}}\put(85,95){\vector(0,-1){55}}
\put(95,75){\line(-1,0){30}}\put(65,75){\vector(0,-1){35}}
\put(95,55){\line(-1,0){50}}\put(45,55){\vector(0,-1){15}}

\end{picture}

This quadrant structure is essentially a combinatorial 
counterpart of \cite[Fig.13.1(b)]{Bax}.
By the definition we have
\begin{align}\label{k:kq}
\chi_w(B)|_{w=1}= \prod_{i=1}^L
\sum_{c \in B^{(k_i)}_{l_i}}e^{\mathrm{wt}(c)}
= \prod_{i=1}^LQ^{(k_i)}_{l_i}
\end{align}
due to (\ref{qf}).
In this sense $\chi_w(B)$ is a $w$-deformation of the character
$\mathrm{ch}\bigl(\otimes_{i=1}^L V(l_i \varpi_{k_i})\bigr)$.
See \cite{N} for a representation theoretical study.

\begin{example}\label{ex:lr}
From the description of the local energy in Section \ref{ss:kr}
and the Littlewood-Richardson rule (see Section \ref{ss:sle}) we have
\begin{align}
\chi_w(B^{(a)}_i \otimes B^{(1)}_s) &= 
\sum_{k=0}^{\min(i,s)} 
w^k s_{(i+s-k,i^{a-1},k)}(w_1,\ldots, w_{n+1}),
\label{k:ex1}\\
\chi_w(B^{(a)}_i \otimes B^{(r)}_1) &= 
\sum_{k=\max(0,r-a)}^{\min(r,n+1-a)} 
w^{a+k-\max(a,r)} s_{((i+1)^{r-k},i^{a-r+k},1^k)}(w_1,\ldots, w_{n+1}),
\end{align}
where $s_\lambda(w_1,\ldots, w_{n+1})$ is the Schur polynomial (\ref{k:sc}).
\end{example}

\subsection{Fermionic formula}
Given a tensor product
$B = B^{(k_1)}_{l_1} \otimes \cdots \otimes B^{(k_L)}_{l_L}$ and 
$\lambda \in P$, we define the Fermionic form\footnote{The 
$M(B,\lambda,w)$ here corresponds to 
\cite[eq.(4.3)]{HKOTY} with $q$ replaced by $w^{-1}$.} 
$M(B,\lambda,w) \in \Z_{\ge 0}[w]$ by
\begin{align}
M(B,\lambda,w) & = \sum_{m} w^{c(m)}
\prod_{1 \le a \le n, i\ge 1}
\left[{p^{(a)}_i +  m^{(a)}_i \atop m^{(a)}_i}\right]_w
\label{mdef}\\
p^{(a)}_i  &=  \sum_{j=1}^L \delta_{a,k_j}\min(i,l_j)
-\sum_{b=1}^nC_{ab}e^{(b)}_i,
\label{pai}\\
e^{(a)}_i &= \sum_{j \ge 1}\min(i,j)m^{(a)}_j,
\label{Eai}\\
c(m) &= \frac{1}{2}\sum_{1\le a, b \le n}C_{ab}
\sum_{i,j\ge 1}\min(i,j)m^{(a)}_i m^{(b)}_j,
\\
\left[{ m \atop k}\right]_w &= \frac{(w)_m}{(w)_k(w)_{m-k}},
\quad
(w)_m = \prod_{i=1}^m(1-w^{i}).
\end{align}
The quantity $p^{(a)}_i$ called vacancy and $e^{(a)}_i$ are the same with 
(\ref{ok:pai}) and (\ref{ok:eai}).
The sum $\sum_{m}$ in (\ref{mdef}) is taken over the array of 
nonnegative integers 
$m=(m^{(a)}_i)_{(a,i) \in [1,n]\times \Z_{\ge 1}}$
satisfying 
\begin{enumerate}
\item nonnegativity of vacancy\footnote{This finitely many conditions
are known to guarantee $p^{(a)}_i \ge 0$ 
for all $(a,i) \in [1,n]\times \Z_{\ge 1}$.}: 
$p^{(a)}_i \ge 0$ for all $(a,i) \in [1,n]\times \Z_{\ge 1}$
such that $m^{(a)}_i \ge 1$,

\item weight condition:
\begin{equation}\label{wc}
\sum_{a=1}^n e^{(a)}_\infty\alpha_a = 
\sum_{i=1}^L l_i\varpi_{k_i} - \lambda.
\end{equation}
\end{enumerate}
By the definition the summand corresponding to 
$m=(m^{(a)}_i)_{(a,i) \in [1,n]\times \Z_{\ge 1}}$ 
in (\ref{mdef}) is zero 
unless 
\begin{equation}\label{lze}
\lambda \in 
\Bigl(\sum_{i=1}^L l_i\varpi_{k_i} - Q_+\Bigr) \cap  P_+.
\end{equation}
The necessity for $\lambda \in P_+$ 
is seen by noting that (\ref{wc}) and (\ref{pai}) imply 
$\lambda = \sum_{a=1}^n p^{(a)}_\infty \varpi_a$. 
Given the data 
$(k_1,l_1), \ldots, (k_L,l_L) \in [1,n]\times \Z_{\ge 1}$,
there are finitely many choices of 
$m=(m^{(a)}_i)_{(a,i) \in [1,n]\times \Z_{\ge 1}}$
such that the above condition (i) and (ii) are satisfied for some $\lambda$
obeying (\ref{lze}). 
Those $m$ are called {\em configurations}. 
A configuration is equivalent to an $n$-tuple of 
Young diagrams via (\ref{mimi}).
They obey nontrivial constrains originating from the above (i) and (ii).  
To determine their asymptotic shape in the large $L$ limit 
is a main theme of this paper.

\begin{theorem} [\cite{KSS}]\label{th:kss}
For any $B = B^{(k_1)}_{l_1} \otimes \cdots \otimes B^{(k_L)}_{l_L}$,
the following equality is valid:
\begin{align}\label{k:xm}
\chi_w(B) = \sum_{\lambda} M(B,\lambda, w) \,\mathrm{ch}V(\lambda),
\end{align}
where the sum extends over those $\lambda$ satisfying (\ref{lze}). 
\end{theorem}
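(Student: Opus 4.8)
The plan is to deduce the identity from a statistic-preserving bijection, rather than to manipulate the two sides directly. Both members of (\ref{k:xm}) lie in $\Z_{\ge 0}[z_1^{\pm1},\ldots,z_n^{\pm1},w]$, and the Weyl characters $\{\mathrm{ch}\,V(\lambda)\}_{\lambda\in P_+}$ are linearly independent over $\Z[w]$, so it suffices to compare the coefficient of each $\mathrm{ch}\,V(\lambda)$. First I would expand the left side. The crystal $B$ decomposes into connected components under the classical ($A_n$) crystal operators $\tilde e_a,\tilde f_a$ $(a\in[1,n])$, each component being isomorphic to the crystal of $V(\lambda)$ for some $\lambda\in P_+$ and carrying a unique classically highest element. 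Since every local energy $H$ is constant on $A_n$-components and the combinatorial $R$ commutes with $\tilde e_a,\tilde f_a$, the corner energy $D$ of (\ref{k:d}) is constant on each component. Grouping $\chi_w(B)=\sum_{b\in B}w^{D(b)}e^{\mathrm{wt}(b)}$ by components therefore shows that the coefficient of $\mathrm{ch}\,V(\lambda)$ is the generating function of $D$ over the highest states of weight $\lambda$, so the theorem is equivalent to
\begin{align}
\sum_{\substack{b\in B\text{ highest}\\ \mathrm{wt}(b)=\lambda}} w^{D(b)} = M(B,\lambda,w).
\end{align}

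Second, I would invoke the KSS bijection recalled in Section \ref{ss:eca}, which matches the highest states $b\in B$ of weight $\lambda$ with the rigged configurations $(m,J)$, where $m=(m^{(a)}_i)$ is a configuration and $J$ the attached riggings, and where $\lambda$ is fixed by the weight condition (\ref{wc}). The crux is that this bijection is weight-preserving and intertwines the two statistics: the corner-transfer-matrix energy $D(b)$ equals the cocharge $c(m)+|J|$ of the rigged configuration, with $|J|$ the sum of all riggings and $c(m)$ the quadratic form appearing in (\ref{mdef}). Establishing this matching of $D$ with cocharge is the substance of \cite{KSS}: one defines the box-removal map $\delta$ on both sides, verifies that it preserves the admissibility conditions (i) and (ii), and then tracks, by induction on $L$ and on $|\lambda|$, how a single box removal changes $D$ and the cocharge in lockstep.

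Third, I would fix a configuration $m$ and sum over admissible riggings. For each $(a,i)$ the $m^{(a)}_i$ riggings on the length-$i$ rows of $\mu_a$ form a weakly decreasing sequence of integers in $[0,p^{(a)}_i]$, that is, a partition inside an $m^{(a)}_i\times p^{(a)}_i$ box, whose generating function by $w^{(\mathrm{size})}$ is the Gaussian binomial $\bigl[{p^{(a)}_i+m^{(a)}_i \atop m^{(a)}_i}\bigr]_w$. Hence, with the sum over $m$ ranging over configurations satisfying (i) and (ii),
\begin{align}
\sum_{\substack{b\text{ highest}\\ \mathrm{wt}(b)=\lambda}} w^{D(b)}
&= \sum_m w^{c(m)}\sum_J w^{|J|}\notag\\
&= \sum_m w^{c(m)}\prod_{1\le a\le n,\,i\ge1}\left[{p^{(a)}_i+m^{(a)}_i \atop m^{(a)}_i}\right]_w = M(B,\lambda,w),
\end{align}
which is exactly (\ref{mdef}) and closes the reduction.

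The main obstacle is clearly the second step. The first and third steps are formal: the first uses only the standard invariance of energy on classical components, and the third is the elementary fact that partitions in a box are enumerated by Gaussian binomials. By contrast, the KSS bijection together with the identity $D(b)=c(m)+|J|$ requires the full inductive construction of the maps and a delicate local analysis of the energy increments under box removal, and it is here that all the genuine work resides.
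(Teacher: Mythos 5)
The paper does not prove this statement at all: Theorem \ref{th:kss} is imported verbatim from \cite{KSS} (note the citation in the theorem header), so there is no internal proof to compare against. Your outline is nevertheless an accurate account of how the result is established in the literature, and the two steps you actually carry out are sound: the constancy of $D$ on classical $A_n$-components (since each local energy $H$ is constant on classical components of $B_i\otimes B_j$ and the combinatorial $R$ commutes with $\tilde e_a,\tilde f_a$) correctly reduces (\ref{k:xm}) to the identity $\sum_{b\ \mathrm{highest},\ \mathrm{wt}(b)=\lambda}w^{D(b)}=M(B,\lambda,w)$, and the enumeration of admissible riggings by partitions in an $m^{(a)}_i\times p^{(a)}_i$ box correctly produces the Gaussian binomials of (\ref{mdef}). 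The entire mathematical content, however, sits in your second step --- the KSS bijection together with the statistic identity relating $D$ to the cocharge $c(m)+|J|$ --- and you do not prove it but defer it to \cite{KSS}, exactly as the paper does; so your argument is a correct reduction rather than a self-contained proof. One point to watch if you were to make the second step precise: the paper's footnote says its $M(B,\lambda,w)$ is the Fermionic form of \cite{HKOTY} with $q$ replaced by $w^{-1}$, so the bijection matches $D$ with the cocharge only up to the complementation implicit in that convention change; you should verify that the normalization $H\ge 0$, $H(u^{(a)}_i\otimes u^{(r)}_s)=0$ used here makes your identity $D(b)=c(m)+|J|$ come out with the stated sign rather than as a complement against the maximal energy.
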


Theorem \ref{th:kss} tells that the Fermionic form 
$M(B,\lambda, w)$ is a 
$w$-analogue of the branching coefficient 
$[\otimes_{i=1}^L V(l_i\varpi_{k_i}): V(\lambda)]$.
The simplest case $L=1$ of Theorem \ref{th:kss} gives (\ref{qf}).
Namely one has
\begin{align}
Q^{(k)}_l = \chi_w(B^{(k)}_l),
\end{align}
which is actually independent of $w$\footnote{For types other than $A_n$,
$\chi_w(B^{(k)}_l)$ depends on $w$ in general.
Many such examples are available in \cite[app.A]{HKOTY}.}.
Fermionic forms for general affine Lie algebra were introduced 
for non-twisted  \cite{HKOTY} and twisted \cite{HKOTT} cases
inspired by those for the  
Kostka-Foulkes polynomials \cite{Ma} which correspond to 
$A^{(1)}_n$ \cite{KKR, KR1}. 

\subsection{Properties of deformed character}

\begin{proposition}[Th.6.1 in \cite{HKOTY}]\label{pr:mq}
Let $B = B^{(k_1)}_{l_1} \otimes \cdots \otimes B^{(k_L)}_{l_L}$ be arbitrary.
For any $(a,i) \in [1,n] \times \Z_{\ge 1}$
the following equality holds\footnote{When $i=1$,
the factor $B^{(a)}_0$ should just be dropped.}:
\begin{equation}\label{hko}
\chi_w(B^{(a)}_i \otimes B^{(a)}_i \otimes B)
= \chi_w(B^{(a)}_{i-1}\otimes B^{(a)}_{i+1} \otimes B)
+ w^\phi
\chi_w(\bigotimes_{b \sim a} B^{(b)}_i \otimes B),
\end{equation}
where $\phi = i + \sum_{j=1}^L \delta_{a, k_j}\min(i,l_j)$.
\end{proposition}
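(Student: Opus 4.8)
The plan is to transfer the identity from deformed characters to Fermionic forms and then prove a purely combinatorial relation among $q$-binomial sums. Since the three terms in \eqref{hko} are each of the form $\chi_w(B')$ for an explicit tensor product $B'$, Theorem \ref{th:kss} expands each as $\sum_\lambda M(B',\lambda,w)\,\mathrm{ch}V(\lambda)$. The characters $\{\mathrm{ch}V(\lambda)\}_{\lambda\in P_+}$ are linearly independent over $\Z[w]$, so \eqref{hko} is equivalent to the family of identities
\[
M(B^{(a)}_i\otimes B^{(a)}_i\otimes B,\lambda,w)
= M(B^{(a)}_{i-1}\otimes B^{(a)}_{i+1}\otimes B,\lambda,w)
+ w^{\phi}\,M(\bigotimes_{b\sim a}B^{(b)}_i\otimes B,\lambda,w),
\]
one for each dominant $\lambda$, which I would fix once and for all.

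First I would compare the three Fermionic forms at a \emph{fixed} configuration $m=(m^{(c)}_{i'})$. The source data enters the vacancy \eqref{pai} only through $\sum_j\delta_{c,k_j}\min(i',l_j)$, and prepending the factors $B^{(a)}_i\otimes B^{(a)}_i$, or $B^{(a)}_{i-1}\otimes B^{(a)}_{i+1}$, or $\bigotimes_{b\sim a}B^{(b)}_i$ adds to this source respectively $2\delta_{c,a}\min(i',i)$, $\delta_{c,a}\bigl(\min(i',i-1)+\min(i',i+1)\bigr)$, and $\theta(c\sim a)\min(i',i)$. The elementary second-difference identity
\[
2\min(i',i)-\min(i',i-1)-\min(i',i+1)=\delta_{i',i}
\]
then shows that the first two source data share the same total weight $2i\varpi_a+\mathrm{wt}(B)$, hence the same constraint \eqref{wc}, and vacancies differing only at $(a,i)$, by $+1$ on the doubled side. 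Consequently every $q$-binomial in the first two $M$'s agrees except the one at $(a,i)$, whose numerator is larger by one on the left.

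Next I would subtract, using the Gaussian--Pascal recurrence
\[
\left[{p+1+m\atop m}\right]_w-\left[{p+m\atop m}\right]_w
= w^{\,p+1}\left[{p+m\atop m-1}\right]_w
\]
with $p=p^{(a)}_i,\ m=m^{(a)}_i$, so the difference of the first two $M$'s becomes a sum over configurations with $m^{(a)}_i\ge1$ carrying a single shifted binomial $\left[{p^{(a)}_i+m^{(a)}_i\atop m^{(a)}_i-1}\right]_w$ and an extra factor $w^{p^{(a)}_i+1}$. I would then identify this with $w^{\phi}M(\bigotimes_{b\sim a}B^{(b)}_i\otimes B,\lambda,w)$ through the bijection $m\mapsto m-e_{(a,i)}$ that deletes one length-$i$ row from $\mu_a$. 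Two structural checks secure this: deleting such a row lowers $\sum_c e^{(c)}_\infty\alpha_c$ by $i\alpha_a=i\bigl(2\varpi_a-\sum_{b\sim a}\varpi_b\bigr)$, which is exactly the weight gap between the two source data, so \eqref{wc} is preserved and the map is a bijection onto the $\lambda$-sector configurations of $\bigotimes_{b\sim a}B^{(b)}_i\otimes B$; and using $C_{ca}=2\delta_{c,a}-\theta(c\sim a)$ together with $\sum_{b\sim a}\delta_{c,b}=\theta(c\sim a)$, the image vacancies match the source-doubled ones exactly (raised by $1$ at $(a,i)$), so the shifted binomial is genuinely the $(a,i)$-binomial of the image configuration while all others are untouched.

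The hard part, and the step that forces the precise value of $\phi$, will be matching the powers of $w$: one must show
\[
c(m)-c(m-e_{(a,i)})=\phi-1-p^{(a)}_i .
\]
I would evaluate the left side from the quadratic form defining $c(m)$, getting $\sum_b C_{ab}e^{(b)}_i-i$, and the right side from \eqref{pai} with the $B^{(a)}_{i-1}\otimes B^{(a)}_{i+1}$ source (whose $(a,i)$-source term is $2i-1$) together with $\phi=i+\sum_j\delta_{a,k_j}\min(i,l_j)$; both collapse to $\sum_b C_{ab}e^{(b)}_i-i$. Once the exponents agree the three manipulations assemble into the required identity. I would finally dispose of the degenerate cases: the case $i=1$, where the factor $B^{(a)}_0$ is simply dropped, and configurations whose vacancy would turn negative, where the relevant $q$-binomial vanishes by the standard convention so that extending the sums over the common index set causes no harm.
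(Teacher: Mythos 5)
Your argument is correct: the vacancy comparison, the $q$-Pascal recurrence, the row-deletion bijection $m\mapsto m-e_{(a,i)}$, and the exponent identity $c(m)-c(m-e_{(a,i)})=\sum_b C_{ab}e^{(b)}_i-i=\phi-1-p^{(a)}_i$ all check out, including the boundary conventions you flag. This is essentially the same route the paper attributes to \cite{HKOTY}, namely substituting the expansion \eqref{k:xm} into the three terms and proving the corresponding decomposition of the Fermionic form.
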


Actually (\ref{hko}) was shown in \cite{HKOTY} by substituting (\ref{k:xm})
to the three terms and using a decomposition of the Fermionic form.
As a corollary of Proposition \ref{pr:mq} and (\ref{k:kq}) with empty $B$
we see that 
the classical character $Q^{(k)}_l$ (\ref{qf}) of the KR module $W^{(k)}_l$ 
satisfies the Q-system:
\begin{align}\label{qsys}
(Q^{(a)}_i)^2 = Q^{(a)}_{i-1}Q^{(a)}_{i+1}
+ \prod_{b \sim a}Q^{(b)}_i.
\end{align}
To validate this at $i=0$ with $Q^{(a)}_0=1$, we employ the convention
$Q^{(a)}_{-1}=0$.

For simplicity we use the abbreviation 
\begin{align}\label{abb}
B_i = B^{(k_i)}_{l_i},\quad
Q_i = Q^{(k_i)}_{l_i}
\end{align}
in the remainder of this section and (\ref{k:rw}).
The following result resembles the Wick theorem.
\begin{lemma}\label{le:wick}
\begin{equation}\label{wick}
\frac{\partial}{\partial w} \log \chi_w(B_1\otimes \cdots \otimes B_L)|_{w=1}
= \sum_{1 \le i<j \le L}\frac{1}{Q_i Q_j}
\sum_{b \otimes c\in B_i \otimes B_j}
H(b \otimes c) e^{\mathrm{wt}(b \otimes c)}.
\end{equation}
\end{lemma}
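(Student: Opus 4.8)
The plan is to compute the logarithmic derivative directly from the definition \eqref{k:xw} of $\chi_w(B)$, working to first order in $(w-1)$. First I would write $\chi_w(B_1\otimes\cdots\otimes B_L)=\sum_{b}w^{D(b)}e^{\mathrm{wt}(b)}$ where the sum runs over all $b=b_1\otimes\cdots\otimes b_L$ and $D(b)=\sum_{1\le i<j\le L}H(b_i\otimes b_j^{(i+1)})$ as in \eqref{k:d}. Since $\frac{\partial}{\partial w}\log\chi_w = \chi_w'/\chi_w$, and at $w=1$ the value $\chi_1(B)=\prod_{i=1}^L Q_i$ by \eqref{k:kq}, the task reduces to evaluating the numerator
\[
\frac{\partial}{\partial w}\chi_w(B)\Big|_{w=1}
=\sum_{b} D(b)\,e^{\mathrm{wt}(b)}
=\sum_{1\le i<j\le L}\ \sum_{b} H(b_i\otimes b_j^{(i+1)})\,e^{\mathrm{wt}(b)},
\]
after which dividing by $\prod_k Q_k$ gives the claimed formula.

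The heart of the argument is to show that for each fixed pair $i<j$,
\[
\sum_{b\in B_1\otimes\cdots\otimes B_L} H(b_i\otimes b_j^{(i+1)})\,e^{\mathrm{wt}(b)}
=\Bigl(\prod_{k\ne i,j}Q_k\Bigr)
\sum_{\beta\otimes\gamma\in B_i\otimes B_j}H(\beta\otimes\gamma)\,e^{\mathrm{wt}(\beta\otimes\gamma)}.
\]
Here the factors $Q_k$ for $k\ne i,j$ come immediately from summing $e^{\mathrm{wt}(b_k)}$ freely over each $B_k$ via \eqref{qf}, since those tensor factors enter neither $H(b_i\otimes b_j^{(i+1)})$ nor the combinatorial-$R$ moves that define $b_j^{(i+1)}$. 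The remaining work is to sum over $b_i$ and $b_j$. I would use that $b_j^{(i+1)}$ is obtained from $b_{i+1}\otimes\cdots\otimes b_j$ by successive applications of the combinatorial $R$, which is a weight-preserving bijection on the relevant tensor products; in particular the map sending $b_{i+1}\otimes\cdots\otimes b_j$ to its rearranged form carrying $b_j$ to position $i+1$ is a bijection, and $H$ is invariant under the combinatorial $R$ since $H(b\otimes c)=H(\tilde c\otimes\tilde b)$ by the definition in Section \ref{ss:kr}. Thus summing $H(b_i\otimes b_j^{(i+1)})e^{\mathrm{wt}(b_i\otimes\cdots\otimes b_j)}$ over $b_{i+1},\ldots,b_j$ collapses, through this bijection and the freeness of the intermediate factors, to $\bigl(\prod_{i<k<j}Q_k\bigr)\sum_{\beta\otimes\gamma}H(\beta\otimes\gamma)e^{\mathrm{wt}(\beta\otimes\gamma)}$ with $\beta\in B_i$, $\gamma\in B_j$.

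The main obstacle I anticipate is precisely this bookkeeping of the combinatorial-$R$ bijection: one must verify carefully that ``sliding'' $b_j$ leftward past $b_{i+1},\ldots,b_{j-1}$ to form $b_j^{(i+1)}$, while the intermediate outputs $b'_{i+1},\ldots,b'_{j-1}$ are produced, is genuinely a weight-preserving bijection whose effect on the $(i,j)$ pair of tableaux is exactly to reproduce every $\beta\otimes\gamma\in B_i\otimes B_j$ with the correct multiplicity $\prod_{i<k<j}Q_k$. This rests on the fact that the combinatorial $R$ is a bijection $B^{(a)}_i\otimes B^{(r)}_s\to B^{(r)}_s\otimes B^{(a)}_i$ preserving weight and energy, together with the Yang–Baxter equation to guarantee consistency of the iterated moves. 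Once this single-pair identity is established, summation over all pairs $i<j$ and division by $\prod_k Q_k$ yields \eqref{wick} directly, recognizing the symmetric sum $\sum_{1\le i<j\le L}\frac{1}{Q_iQ_j}\sum_{b\otimes c}H(b\otimes c)e^{\mathrm{wt}(b\otimes c)}$.
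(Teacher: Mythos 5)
Your proposal is correct and follows essentially the same route as the paper: differentiate $\chi_w$ at $w=1$, exchange the order of summation, and for each pair $i<j$ use the weight-preserving bijectivity of the combinatorial $R$ (the ``change of summation variables'' that slides $b_j$ to position $i+1$) to factor the sum into $\bigl(\prod_{k\neq i,j}Q_k\bigr)\sum_{b\otimes c\in B_i\otimes B_j}H(b\otimes c)e^{\mathrm{wt}(b\otimes c)}$, then divide by $\prod_k Q_k$ via (\ref{k:kq}). The only superfluous ingredient is your appeal to the invariance of $H$ under $R$, which is not needed since the summand already involves $H(b_i\otimes b_j^{(i+1)})$ with the slid element.
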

The equality is {\em invalid} without specialization to $w=1$.  
\begin{proof}
From (\ref{k:xw}) and (\ref{k:d}) we have
\begin{align*}
\frac{\partial \chi_w(B)}{\partial w}\Big|_{w=1}&= 
\sum_{b_1 \otimes \cdots \otimes b_L}
D(b_1 \otimes \cdots \otimes b_L)
e^{\mathrm{wt}(b_1 \otimes \cdots \otimes b_L)}
=\sum_{1 \le i < j \le L}
\sum_{d}
H(b_i \otimes b^{(i+1)}_j)
e^{\mathrm{wt}(d)},
\end{align*}
where the sum is taken over 
$d =b_1\otimes \cdots \otimes b_i \otimes b^{(i+1)}_j
\otimes b'_{i+1} \otimes \cdots \otimes b'_{j-1} \otimes
b_{j+1} \otimes \cdots \otimes b_L
\in B_1 \otimes \cdots \otimes B_i \otimes 
B_{j} \otimes B_{i+1} \otimes \cdots \otimes B_{j-1}
\otimes B_{j+1} \otimes \cdots \otimes B_L$ in the notation of 
(\ref{brj}). 
By changing the summation variables, the summand corresponding to 
the pair $i<j$ is expressed as
\begin{align*}
\Biggl(\sum_{b_1 \otimes \cdots \otimes b_{i-1}}
\!\!\!
e^{\mathrm{wt}(b_1 \otimes \cdots \otimes b_{i-1})}\Biggr)
\Biggl(\sum_{b \otimes c \in B_i \otimes B_j}H(b \otimes c)
e^{\mathrm{wt}(b \otimes c)}\Biggr)
\Biggl(\sum_{b_{i+1} \otimes \cdots 
\otimes \overset{\vee}{b_j} \otimes  \cdots \otimes b_L}
\!\!\!\!\!\!\!\!\!\!\!\!
e^{\mathrm{wt}(b_{i+1} \otimes \cdots 
\otimes \overset{\vee}{b_j} \otimes  \cdots \otimes b_L)}\Biggr),
\end{align*}
where $\overset{\vee}{b_j} $ means the absence of the factor.
This equals $(\prod_{k=1,k\neq i,j}^LQ_k)
\sum_{b \otimes c\in B_i \otimes B_j}
H(b \otimes c) e^{\mathrm{wt}(b \otimes c)}$.
Therefore (\ref{wick}) follows from (\ref{k:kq}).
\end{proof}

\section{TBA analysis}\label{sec:tba}

\subsection{\mathversion{bold}Ensemble of $n$-tuple of Young diagrams
associated with Fermionic form}\label{ss:ff}

Having prepared the Fermionic form, we are ready to perform an asymptotic analysis of the randomized BBS on $B = (B^{(r)}_s)^{\otimes L}$ in Section \ref{sec:rbbs}. According to Proposition \ref{pr:st}, each local state obeys the probability distribution $\pi^{(r)}_s$ (\ref{noc}) on $B^{(r)}_s$. In other words a local state $u \in B^{(r)}_s$ occurs with the probability proportional to 
$e^{\mathrm{wt}(u)} = \prod_{a=1}^n z_a^{u_a}$
for $\mathrm{wt}(u) = u_1\varpi_1+\cdots + u_n \varpi_n \in P$. We shall concentrate on the regime of the parameters 
$z_1, \ldots, z_n$ such that
$z_a >0$ and $\prod_{b=1}^nz_b^{C_{ab}}>1$ for all $a \in [1,n]$. In view of $\prod_{b=1}^nz_b^{C_{ab}} = 
e^{\sum_{b=1}^nC_{ab}\varpi_b}= e^{\alpha_a}$,
it means $\alpha_a>0$ for all the simple roots $\alpha_a$ of $A_n$. Thus the local states closer to the highest weight element $u^{(r)}_s$ are realized with strictly larger probability. For instance, in case of $B_{s}^{(r)}=B_{1}^{(1)}$, this is equivalent to $\pi_{1}^{(1)}(1)>\pi_{1}^{(1)}(2)>\cdots>\pi_{1}^{(1)}(n+1)$.

Let $\mathcal{E}_L(r,s)$ be
the ensemble of states from $(B^{(r)}_s)^{\otimes L}$ 
generated by i.i.d.~probability distribution $(\pi^{(r)}_s)^{\otimes L}$.
Let further $\mathcal{E}^+_L(r,s)$ be the ensemble of highest states
$b_1 \otimes \cdots \otimes b_L \in ( B^{(r)}_s)^{\otimes L}$
whose probability distribution is proportional to 
$e^{\mathrm{wt}(b_1 \otimes \cdots \otimes b_L)}$.
The randomized BBS in Section \ref{sec:rbbs}
corresponds to $\mathcal{E}_L(r,s)$. 
It slightly differs from 
$\mathcal{E}^+_L(r,s)$ in which the local states are not i.i.d.~due 
to the nonlocal constraint of being highest.
Both of them induce a probability distribution on the 
set of $n$-tuple of Young diagrams 
$\mu_1, \ldots, \mu_n$ by taking the conserved quantities.
In the regime $\alpha_1, \ldots, \alpha_n >0$ under consideration,
the highest condition on $b_1 \otimes \cdots 
\otimes b_j \otimes b_{j+1} \otimes \cdots \otimes b_L 
\in (B^{(r)}_s)^{\otimes L}$ 
becomes void almost surely for the right part
$b_{j+1} \otimes \cdots \otimes b_L$ in the 
limit $L \gg j \rightarrow \infty$.
Since the large $L$ asymptotics $\mu_1, \ldots, \mu_n$ 
does not depend on the left finite tail of $b_1 \otimes \cdots \otimes b_L$,
we claim that those induced from 
$\mathcal{E}_L(r,s)$ and $\mathcal{E}^+_L(r,s)$ coincide.
(This ``asymptotic equivalence" of 
$\mathcal{E}_L(r,s)$ and $\mathcal{E}^+_L(r,s)$ is 
discussed in more detail for $B^{(r)}_s = B^{(1)}_1$ in \cite{KL}.)

For $\mathcal{E}^+_L(r,s)$, the conserved quantities 
$\mu_1, \ldots, \mu_n$ are the Young diagrams in the 
rigged configurations obtained by the KSS bijection.
Therefore their (joint) probability distribution is explicitly given by 
\begin{align}
\mathrm{Prob}(\mu_1,\ldots, \mu_n) 
&= \frac{1}{Z_L}
e^{-\sum_{a=1}^n \beta_a\sum_{i\ge 1}i m^{(a)}_i}
\prod_{1 \le a \le n, i\ge 1}
\left({p^{(a)}_i +  m^{(a)}_i \atop m^{(a)}_i}\right),
\label{prom}\\
m^{(a)}_i &= \text{$\#$ of length $i$ rows in $\mu_a$}.
\label{mimi}
\end{align}
In what follows we will identity 
the $n$-tuple of Young diagrams
$\mu=(\mu_1,\ldots, \mu_n)$ with the data
$m=(m^{(a)}_i)_{(a,i) \in [1,n]\times \Z_{\ge 1}}$
by (\ref{mimi}).
Then $|\mu_a| =\sum_{i \ge 1}im^{(a)}_i = e^{(a)}_\infty$ holds
from (\ref{Eai}).

The product of the binomial coefficients in (\ref{prom}) 
is the one in the Fermionic form (\ref{mdef}) at $w=1$.
It accounts for the multiplicity of 
$\mu=(\mu_1,\ldots, \mu_n)$ in the image of the KSS bijection.
See the explanation around (\ref{k:be}).
The vacancy $p^{(a)}_i$ is (\ref{pai}) with the choice
$\forall (k_i,l_i)=(r,s)$:
\begin{align}\label{ppai}
p^{(a)}_i = L\delta_{a,r}\min(i,s) - 
\sum_{b=1}^n C_{ab}e^{(b)}_i.
\end{align}
This will serve as the source of $(L,r,s)$-dependence of 
$\mathrm{Prob}(\mu_1,\ldots, \mu_n)$.
The parameters $\beta_1,\ldots, \beta_n$
are chemical potentials or inverse temperatures 
in the context of the generalized Gibbs ensemble. 
As we will see in (\ref{mz}),
they are actually the simple roots 
$\alpha_1, \ldots, \alpha_n$. 
Therefore the factor 
$e^{-\sum_{a=1}^n \beta_a\sum_{i\ge 1}i m^{(a)}_i}$ 
in (\ref{prom}) is just $e^{-Ls\varpi_r+\lambda}$ 
due to (\ref{wc}).
Besides the irrelevant constant $e^{-Ls\varpi_r}$,
the factor $e^\lambda$ here indeed incorporates the relative probability  
$e^{\mathrm{wt}(b_1 \otimes \cdots \otimes b_L)}$ adopted 
in $\mathcal{E}^+_L(r,s)$. 
Note that $\mathrm{Prob}(\mu_1,\ldots, \mu_n) =0$ unless 
$m=(m^{(a)}_i)_{(a,i) \in [1,n]\times \Z_{\ge 1}}$ is a configuration
in the sense explained after (\ref{lze}).
Finally $Z_L$ is given by (\ref{gcp}).

Our aim is to determine the ``equilibrium", i.e., most probable 
configuration under the probability distribution (\ref{prom})
when $L$ tends to infinity.
It will be done by the method of grand canonical ensemble 
with the partition function, namely the generating function of  (\ref{prom}):
\begin{align}
Z_L &=\sum_{m}
e^{-\sum^n_{a=1}e^{(a)}_\infty \beta_a}
\prod_{1 \le a \le n, i\ge 1}
\left({p^{(a)}_i +  m^{(a)}_i \atop m^{(a)}_i}\right)
\label{gcp}\\
&=e^{-Ls \varpi_r}  \sum_{\lambda}
M((W^{(r)}_s)^{\otimes L},\lambda, 1) \,e^\lambda.
\label{gcp2}
\end{align}
The latter expression tells that $Z_L$ is 
a generating series of the branching coefficients 
$[V(s\varpi_r) ^{\otimes L}: V(\lambda)]$.
Numerous combinatorial objects labeling the irreducible 
components $V(\lambda)$ and their counting formulas are known in 
combinatorial representation theory and algebraic combinatorics.

In the original work by Bethe himself \cite{Be},
a considerable effort was devoted to the completeness issue 
of his own {\em string hypothesis}.
The succeeding development \cite{KR1, KR2} 
assembled the Bethe strings and 
visualized them as rigged configurations.
These works produced the Fermionic counting formula (\ref{gcp}) for 
the representation theoretical quantity (\ref{gcp2}).
A further insight, soliton/string correspondence (see Section \ref{ss:eca}) 
gained after entering this century, 
elucidated that the Bethe strings are nothing but the BBS {\em solitons} 
for which one can formulate an integrable dynamics based on KR crystals. 
It endowed the {\em individual} term in the sum (\ref{gcp}) with a 
natural interpretation as the {\em partition function} 
of the BBS with a prescribed soliton content $m$ \cite{KOTY,KOSTY}.
In short the BBS provided the Fermionic formula with a {\em refinement} 
via a {\em quasi-particle picture}.
Physically speaking the BBS solitons are bound states of magnons 
over a ferromagnetic ground state of an integrable $A^{(1)}_n$-symmetric 
spin chain deformed by $U_{q=0}(A^{(1)}_n)$.  

\subsection{TBA equation and Y-system}
We are going to 
apply the idea of TBA \cite{YY} 
to the system governed by the grand canonical partition function (\ref{gcp}).
Similar problems have been studied in the context of 
ideal gas of Haldane exclusion statistics.
See for example the original works \cite{Su2, Wu, IA} 
and a review from the viewpoint of a generalized Q-system 
\cite[sec.13]{KNS}.
In fact our treatment here is a constant (spectral parameter free)
version of the TBA analysis in \cite[sec.14, 15]{KNS}.
In Theorem \ref{thk:main} it will be shown that 
the results coincide quite nontrivially with 
those obtained from the crystal theory consideration.

In the large $L$ limit, 
the dominant contribution in (\ref{gcp}) come from 
those $m=(m^{(a)}_i)$ exhibiting the 
$L$-linear asymptotic behavior
\begin{align}\label{mpel}
m^{(a)}_i \simeq L \rho^{(a)}_i,\quad
p^{(a)}_i \simeq L \sigma^{(a)}_i,\quad
e^{(a)}_i \simeq L \varepsilon^{(a)}_i,\quad
|\mu_a| = e^{(a)}_\infty \simeq L \varepsilon^{(a)}_\infty,
\end{align}
where $\rho^{(a)}_i,\sigma^{(a)}_i, \varepsilon^{(a)}_i$ are of $O(L^0)$.
This fact will be justified by invoking the 
large deviation principle in \cite{KL}.
From (\ref{ppai}) and (\ref{Eai}) the scaled variables are related as 
\begin{align}\label{ser}
&\sigma^{(a)}_i = \delta_{a,r}\min(i,s) 
- \sum_{b=1}^nC_{ab}\,\varepsilon^{(b)}_i,
\quad
\varepsilon^{(a)}_i= \sum_{j\ge 1}\min(i,j)\rho^{(a)}_j.
\end{align}
This is a constant version of 
the Bethe equation in terms of string density $\rho^{(a)}_i$ and 
the hole density $\sigma^{(a)}_i$ in 
\cite[eq.(15.6)]{KNS}$|_{l \rightarrow \infty}$.
The equilibrium configuration corresponds to
the $\rho=(\rho^{(a)}_i)$ that minimizes the ``free energy per site"
\begin{align}\label{Fn}
F[\rho] = \sum_{a=1}^n\beta_a\sum_{i=1}^l i\rho^{(a)}_i 
-
\sum_{a=1}^n\sum_{i=1}^l 
\Bigl((\rho^{(a)}_i + \sigma^{(a)}_i)\log(\rho^{(a)}_i + \sigma^{(a)}_i)
- \rho^{(a)}_i \log \rho^{(a)}_i -\sigma^{(a)}_i \log \sigma^{(a)}_i\Bigr).
\end{align}
This is $(-1/L)$ times logarithm of the summand in (\ref{gcp}) 
to which the Stirling formula has been applied.
Note that (\ref{mpel}) is 
consistent with the extensive property of the free energy, which enabled us 
to remove the system size $L$ as a common overall factor.
We have introduced a cut-off $l$ for the index $i$,
which will be sent to infinity later.
Accordingly the latter relation 
in (\ref{ser}) should  be understood as 
$\varepsilon^{(a)}_i= \sum_{j=1}^l\min(i,j)\rho^{(a)}_j$.
From $\frac{\partial \sigma^{(b)}_j}{\partial \rho^{(a)}_i} = -C_{ab}\min(i,j)$, 
one finds that the equilibrium 
condition $\frac{\partial F[\rho]}{\partial \rho^{(a)}_i} = 0$
is expressed as a TBA equation
\begin{align}
-i\beta_a + \log(1+Y^{(a)}_i)=
\sum_{b=1}^nC_{ab}\sum_{j=1}^l\min(i,j)
\log(1+(Y^{(b)}_j)^{-1})
\label{TBAn}
\end{align}
for $1 \le i \le l$ in terms of the ratio
\begin{align}\label{yrs}
Y^{(a)}_i = \frac{\sigma^{(a)}_i}{\rho^{(a)}_i}. 
\end{align}
The TBA equation is equivalent to the Y-system
\begin{align}\label{YY}
\frac{(1+Y^{(a)}_{i})^2}{
(1+Y^{(a)}_{i-1})(1+Y^{(a)}_{i+1})}
= \prod_{b=1}^n
(1+(Y^{(b)}_i)^{-1})^{C_{ab}}
\end{align}
for $1 \le i \le l$ 
with the boundary condition
\begin{align}\label{ybd}
Y^{(a)}_0 =0, \quad 1+Y^{(a)}_{l+1} = e^{\beta_a}(1+Y^{(a)}_l).
\end{align}
The Y-system is known to follow from the 
Q-system (\ref{qsys}) by the substitution (cf. \cite[Prop. 14.1]{KNS})
\begin{align}\label{yqw}
Y^{(a)}_i = \frac{Q^{(a)}_{i-1}Q^{(a)}_{i+1}}
{\prod_{b\sim a} Q^{(b)}_i},
\qquad 
1+Y^{(a)}_i = \prod_{b=1}^n (Q^{(b)}_i)^{C_{ab}},
\qquad
1+(Y^{(a)}_i)^{-1}
= \frac{(Q^{(a)}_i)^2}{Q^{(a)}_{i-1}Q^{(a)}_{i+1}},
\end{align}
where $Q^{(a)}_i \in \Z[z^{\pm 1}_1,\ldots, z^{\pm 1}_n]$ 
is defined in (\ref{qf}).
Now we take the boundary condition (\ref{ybd}) into account.
The left one $Y^{(a)}_0=0$  
is automatically satisfied due to $Q^{(a)}_{-1}=0$.
See the remark after (\ref{qsys}).
On the other hand the right condition in (\ref{ybd}) is expressed as
\begin{align}\label{qoq}
e^{\beta_a} = \prod_{b=1}^n\left(
\frac{Q^{(b)}_{l+1}}{Q^{(b)}_l}\right)^{C_{ab}}.
\end{align}
The result 
\cite[Th. 7.1 (C)]{HKOTY} tells that 
$\lim_{l \rightarrow \infty}
(Q^{(a)}_{l+1}/Q^{(a)}_l)= e^{\varpi_a}$
in the regime $\prod_{b=1}^nz_b^{C_{ab}}>1$ under consideration.
Thus the large $l$ limit of (\ref{qoq}) can be taken, giving 
\begin{align}\label{mz}
e^{\beta_a} = \prod_{b=1}^n z^{C_{ab}}_b = e^{\alpha_a}.
\end{align}
In this way the chemical potentials $\beta_a$ 
are naturally identified with the simple roots $\alpha_a$.
We shall keep using the both symbols although.

To summarize so far, 
we have determined the equilibrium configuration  
$\rho_{\mathrm{eq}}$ of $\rho=(\rho^{(a)}_i)$
implicitly by (\ref{ser}), (\ref{yrs}), (\ref{yqw}) and (\ref{mz})
in terms of the chemical potentials $\beta_1, \ldots, \beta_n$.
The next task is to 
relate them to the canonically conjugate densities which are 
``physically more controllable".
It amounts to formulating the equation of state.
This we do in the next subsection.

\subsection{Equation of state for randomized BBS}

From now on we will only treat the equilibrium values and 
frequently omit mentioning it.
Let us calculate the equilibrium value of the free energy per site (\ref{Fn}).
First we use (\ref{yrs}) to rewrite (\ref{Fn}) as
\begin{align}\label{F1}
F[\rho_{\mathrm{eq}}] = \sum_{a=1}^n\beta_a\sum_{i=1}^l i\rho^{(a)}_i 
-\sum_{a=1}^n\sum_{i=1}^l 
\Bigl(\rho^{(a)}_i\log(1+Y^{(a)}_i) 
+ \sigma^{(a)}_i\log(1+(Y^{(a)}_i)^{-1})\Bigr).
\end{align}
On the other hand taking the linear combination of the TBA equation as
$\sum_{a=1}^n\sum_{i=1}^l (\ref{TBAn}) \times \rho^{(a)}_i$ we get
\begin{align}
\sum_{a=1}^n\beta_a\sum_{i=1}^li\rho^{(a)}_i =
\sum_{a=1}^n\sum_{i=1}^l\rho^{(a)}_i\log(1+Y^{(a)}_i)
-\sum_{a,b=1}^n C_{ab}\sum_{i,j=1}^l\min(i,j)
\rho^{(a)}_i\log(1+(Y^{(b)}_j)^{-1}).
\end{align}
Substituting this into the first term on the RHS of (\ref{F1}) and 
using $\sigma^{(a)}_i$ from (\ref{ser}) we find
\begin{align}\label{Fm}
F[\rho_{\mathrm{eq}}] = -\sum_{i=1}^l
\min(i,s)\log(1+(Y^{(r)}_i)^{-1}) = 
-\log\left(Q^{(r)}_s\Bigl(\frac{Q^{(r)}_l}{Q^{(r)}_{l+1}}\Bigr)^s\right)
\overset{l \rightarrow \infty}{\longrightarrow}
-\log\left(z^{-s}_rQ^{(r)}_s\right),
\end{align}
where (\ref{yqw}) is used and $l \gg s$ is assumed in the second equality.

Now we resort to the general relation
\begin{align}\label{FZ}
F[\rho_{\mathrm{eq}}]
= -\lim_{L \rightarrow \infty}\frac{1}{L}\log Z_L.
\end{align}
From (\ref{mpel}), 
we see that the equilibrium (most probable) value of the 
$1/L$-scaled number of boxes $L^{-1}|\mu_1|, \ldots, L^{-1}|\mu_n|$ 
in the Young diagrams $\mu_1,\ldots, \mu_n$ are that for 
$\varepsilon^{(1)}_\infty, \ldots, \varepsilon^{(n)}_\infty$.
Denote them by $\nu_1, \ldots, \nu_n$.
These are parameters representing the {\em densities} 
of the boxes in the Young diagrams $\mu_1, \ldots, \mu_n$.
From (\ref{gcp}), (\ref{mpel})  and (\ref{FZ}) one has the relation 
$\nu_a
= \frac{\partial F[\rho_\mathrm{eq}]}{\partial \beta_a}$ for 
$1 \le a \le n$.
In view of (\ref{mz}) 
it is convenient to take the linear combination of this as follows:
\begin{align}
\sum_{b=1}^nC_{ab}\nu_b = 
\left(\sum_{b=1}^nC_{ab}\frac{\partial}{\partial \beta_b}\right)
F[\rho_\mathrm{eq}]
= z_a\frac{\partial F[\rho_\mathrm{eq}]}{\partial z_a}.
\end{align}
Substituting (\ref{Fm}) we arrive at 
the equation of state of the system:
\begin{align}\label{est}
z_a\frac{\partial}{\partial z_a}\log Q^{(r)}_s 
= s\delta_{a,r}-\sum_{b=1}^nC_{ab}\,\nu_b
\qquad (1 \le a \le n).
\end{align}
The LHS is an explicit rational function 
of $z_1,\ldots, z_n$ that can be calculated from (\ref{k:sc}) and (\ref{qf}).
The variables $z_1,\ldots, z_n$ are simply related to the chemical potentials 
$\beta_1, \ldots, \beta_n$ 
or equivalently 
to the fugacity $e^{-\beta_1}, \ldots, e^{-\beta_n}$ by (\ref{mz})
as
\begin{align}\label{k:hro}
z_a = e^{\sum_{b=1}^n(C^{-1})_{ab}\beta_b}
= e^{\sum_{b=1}^n \bigl(\min(a,b)-\frac{ab}{n+1}\bigr)\beta_b}.
\end{align}
Thus (\ref{est}) relates the densities 
$\nu_1, \ldots, \nu_n$
with the fugacity $e^{-\beta_1}, \ldots, e^{-\beta_n}$, 
thereby enabling us to control either one by the other. 
Set
\begin{align}\label{a:qb}
y_a = e^{-\alpha_a},\qquad
\overline{Q}^{(r)}_s= z^{-s}_r Q^{(r)}_s \in 
\Z[y_1,\ldots, y_n],
\end{align}
where $y_1, \ldots, y_n$ are the fugacity mentioned just above since  
$\alpha_a=\beta_a$ according to (\ref{mz}).
Then the equation of state (\ref{est}) 
also admits a somewhat simpler presentation as
\begin{align}\label{k:mhn}
\nu_a = y_a \frac{\partial}{\partial y_a}\log \overline{Q}^{(r)}_s.
\end{align}
To see this, note 
from $y_a = \prod_{b=1}^n z_b^{-C_{ab}}$ and (\ref{a:qb}) that 
(\ref{est}) is rewritten as
\begin{align}
\sum_{b=1}^nC_{ab}\nu_b = 
s\delta_{a,r}-z_a\frac{\partial}{\partial z_a}\log\bigl(
z^s_r \overline{Q}^{(r)}_s \bigr) 
= -z_a\frac{\partial}{\partial z_a}\log
\overline{Q}^{(r)}_s = 
\sum_{b=1}^n C_{ab}\,
y_b \frac{\partial}{\partial y_b}\log \overline{Q}^{(r)}_s.
\end{align}

In the language of the BBS,
the relation $\varepsilon^{(a)}_\infty = \nu_a$ implies that 
the equilibrium weight of a site variable $b \in B^{(r)}_s$ is
\begin{align}\label{k:ew}
\mathrm{wt}(b) = s\varpi_r - \sum_{a=1}^n \nu_a \alpha_a\;\; \leftrightarrow \;\;
\#_a(b) = s\theta(a \le r) +\nu_{a-1}-\nu_a\quad(\nu_0=\nu_{n+1}=0),
\end{align}
where $\#_a(b)$ for $a \in [1,n+1]$ denotes the number of the letter $a$ 
in $b \in B^{(r)}_s$ regarded as a semistandard tableau of shape $(s^r)$.
The empty space corresponds to the letter $1$.
Note that the weight $\mathrm{wt}(b)$ specifies 
an element $b \in B^{(r)}_s$ uniquely if and only if 
$\min(r,n-r,s)=1$.

\subsection{Difference equation characterizing the equilibrium shape}\label{ss:de}

One should recognize (\ref{est}) as 
the $i=\infty$ case of the left relation in (\ref{ser}).
It concerns the total number (density) of boxes in the Young diagrams only.
However, the relations (\ref{ser}) and (\ref{yrs}) 
hold for any finite $i$ and provide information on the equilibrium {\em shape}
of these Young diagrams.
In fact they can be combined to give the difference equation
for the variables $\varepsilon^{(a)}_i$ as
\begin{align}\label{con}
\delta_{a,r}\min(i,s) - \sum_{b=1}^nC_{ab}\varepsilon^{(b)}_i 
= Y^{(a)}_i(-\varepsilon^{(a)}_{i-1}
+2\varepsilon^{(a)}_i - \varepsilon^{(a)}_{i+1}).
\end{align} 
The quantity in the 
parenthesis in the RHS is $\rho^{(a)}_i$ (\ref{ser}), which is the 
$1/L$-scaled number $m^{(a)}_i$ of length $i$ rows in 
the Young diagram $\mu_a$.
See (\ref{mimi}) and (\ref{mpel}).

In this way we have characterized the vertically $1/L$-scaled 
equilibrium shape of the Young diagrams $\mu_1, \ldots, \mu_n$
under the prescribed densities 
$(\nu_1, \ldots, \nu_n)= \lim_{L \rightarrow \infty}L^{-1}
(|\mu_1|, \ldots, |\mu_n|)$
in terms of the variables $\varepsilon^{(a)}_i$.
The procedure consists of the following steps:

\begin{enumerate}

\item 
Given the densities $\nu_1, \ldots, \nu_n$,
determine $z_1, \ldots, z_n$ by the equation of state (\ref{est}) or 
(\ref{k:mhn}).

\item Compute $Y^{(a)}_i$ by substituting those $z_1, \ldots, z_n$ into 
(\ref{yqw}) and (\ref{qf}).

\item Find the solution to (\ref{con}) with the 
boundary condition 
$\varepsilon^{(a)}_0=0, \varepsilon^{(a)}_\infty=\nu_a$.

\end{enumerate} 

Once $\varepsilon^{(a)}_i$ is known, 
the equilibrium Young diagrams 
are deduced from
\begin{align}\label{k:es5}
\varepsilon^{(a)}_i = 
\lim_{L \rightarrow \infty}
\frac{1}{L}\bigl(\text{$\#$ of boxes in the left $i$ columns of 
$\mu_a$}\bigr). 
\end{align}
See (\ref{mimi}), (\ref{mpel}) and (\ref{ser}).
In the next subsection we present an explicit solution to the step (iii)
in terms of the stationary local energy (\ref{k:ep}).

\subsection{Solution to the difference equation by stationary local energy}

From (\ref{noc}) the stationary local energy (\ref{k:ep}) 
(for the i.i.d.~ensemble $\mathcal{E}_L(r,s)$ with $L \rightarrow \infty$) 
is expressed as 
\begin{align}\label{k:izm}
h^{(a)}_i = \frac{\sum_{x \otimes y
\in B^{(a)}_i \otimes B^{(r)}_s}H(x \otimes y)
e^{\mathrm{wt}(x \otimes y)}}
{Q^{(a)}_i Q^{(r)}_s}.
\end{align}
According to (\ref{k:hai}) this is equal to the 
$\lim_{L \rightarrow \infty}\frac{1}{L}E^{(a)}_i$. 
On the other hand (for the highest state ensemble
$\mathcal{E}^+_L(r,s)$ with $L \rightarrow \infty$),
the soliton/string correspondence (\ref{k:ii}) and 
the definition (\ref{mpel}) indicate that 
the same quantity should also show up in the TBA analysis exactly 
as $\varepsilon^{(a)}_i$.
Thus the asymptotic equivalence of the two ensembles indicates that 
they coincide.
The next theorem, which is our main result in the paper,
identifies them rigorously.
\begin{theorem}\label{thk:main}
The solution to the difference equation (\ref{con})
satisfying the boundary condition 
$\varepsilon^{(a)}_0=0, \varepsilon^{(a)}_\infty=\nu_a$
is provided by the stationary local energy
$\varepsilon^{(a)}_i=h^{(a)}_i$ in (\ref{k:izm}).
\end{theorem}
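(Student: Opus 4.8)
The plan is to verify directly that the stationary local energy $h^{(a)}_i$ solves the difference equation \eqref{con} together with its two boundary conditions, treating $Y^{(a)}_i$ as the \emph{fixed} function of the fugacities supplied by \eqref{yqw} and \eqref{qf}. The starting observation is that Lemma \ref{le:wick} identifies $h^{(a)}_i$ with a logarithmic derivative of a deformed character: taking $L=2$, $B_1=B^{(a)}_i$, $B_2=B^{(r)}_s$ in \eqref{wick} and comparing with \eqref{k:izm} gives
\begin{align}
h^{(a)}_i=\frac{\partial}{\partial w}\log\chi_w(B^{(a)}_i\otimes B^{(r)}_s)\Big|_{w=1}.
\end{align}
More generally \eqref{wick} expresses the $w=1$ logarithmic derivative of \emph{any} tensor product as the sum, over pairs of tensor factors, of a pairwise energy $\tfrac{1}{Q_iQ_j}\sum_{b\otimes c}H(b\otimes c)e^{\mathrm{wt}(b\otimes c)}$ that depends only on the two crystals occurring in the pair. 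This is the device that turns three-term character identities into relations among the $h^{(a)}_i$.

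First I would apply Proposition \ref{pr:mq} in two instances: once with trailing factor $B=B^{(r)}_s$, where $\phi=i+\delta_{a,r}\min(i,s)$, and once with $B$ empty, where $\phi=i$. Applying $\frac{\partial}{\partial w}\log(\,\cdot\,)|_{w=1}$ to the three terms of \eqref{hko} and using the pairwise decomposition, each instance becomes an equation among pairwise energies and the explicit exponent, weighted by the $w=1$ values $(Q^{(a)}_i)^2$, $Q^{(a)}_{i-1}Q^{(a)}_{i+1}$, $\prod_{b\sim a}Q^{(b)}_i$ furnished by the Q-system \eqref{qsys}. Subtracting the empty-$B$ relation from the $B=B^{(r)}_s$ relation cancels all \emph{internal} pairwise energies among the factors $B^{(a)}_i$, $B^{(a)}_{i\pm1}$ and $\bigotimes_{b\sim a}B^{(b)}_i$, leaving only the single pairings against $B^{(r)}_s$ and the surviving part $\phi-i=\delta_{a,r}\min(i,s)$ of the exponent:
\begin{align}
2h^{(a)}_i(Q^{(a)}_i)^2=(h^{(a)}_{i-1}+h^{(a)}_{i+1})\,Q^{(a)}_{i-1}Q^{(a)}_{i+1}+\Bigl(\delta_{a,r}\min(i,s)+\sum_{b\sim a}h^{(b)}_i\Bigr)\prod_{b\sim a}Q^{(b)}_i.
\end{align}

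Next I would divide by $(Q^{(a)}_i)^2$ and use \eqref{yqw} in the forms $\frac{Q^{(a)}_{i-1}Q^{(a)}_{i+1}}{(Q^{(a)}_i)^2}=\frac{Y^{(a)}_i}{1+Y^{(a)}_i}$ and $\frac{\prod_{b\sim a}Q^{(b)}_i}{(Q^{(a)}_i)^2}=\frac{1}{1+Y^{(a)}_i}$. Clearing the factor $1+Y^{(a)}_i$ and rewriting $\sum_{b\sim a}h^{(b)}_i=2h^{(a)}_i-\sum_b C_{ab}h^{(b)}_i$ via $\theta(a\sim b)=2\delta_{ab}-C_{ab}$ turns this into \eqref{con} with $\varepsilon^{(a)}_i=h^{(a)}_i$; this is the routine part. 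The left boundary condition $\varepsilon^{(a)}_0=h^{(a)}_0=0$ is immediate, since $B^{(a)}_0$ is trivial and $H\equiv0$.

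The delicate point, and the one I expect to be the main obstacle, is the right boundary condition $\lim_{i\to\infty}h^{(a)}_i=\nu_a$, because $\nu_a$ is prescribed through the equation of state \eqref{est}. From the verified form of \eqref{con} one has $\sum_b C_{ab}h^{(b)}_i=\delta_{a,r}\min(i,s)-Y^{(a)}_i\,\rho^{(a)}_i$, so it suffices to show $\lim_{i\to\infty}Y^{(a)}_i\,\rho^{(a)}_i=z_a\frac{\partial}{\partial z_a}\log Q^{(r)}_s$ and then invert $C$ against \eqref{est}. Here $Y^{(a)}_i\to\infty$ while the discrete second difference $\rho^{(a)}_i\to0$, so the limit is an indeterminate product that must be resolved through the subleading asymptotics of $Q^{(a)}_i=s_{(i^a)}(w_1,\ldots,w_{n+1})$: the leading ratio is governed by $\lim_{i\to\infty}Q^{(a)}_{i+1}/Q^{(a)}_i=e^{\varpi_a}=z_a$ from \cite[Th.~7.1(C)]{HKOTY}, but fixing the value of the product needs the next order. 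I would control this by invoking the stabilization of the branching of $V(i\varpi_a)\otimes V(s\varpi_r)$ as $i\to\infty$, writing $h^{(a)}_i$ through the Littlewood--Richardson expansion \eqref{k:hhh}, extracting its limit, and matching it with $\nu_a=y_a\frac{\partial}{\partial y_a}\log\overline{Q}^{(r)}_s$ from \eqref{k:mhn}. Controlling this asymptotic identification, rather than the algebraic recursion, is where the real work lies.
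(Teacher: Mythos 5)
Your treatment of the difference equation itself is correct and is essentially the paper's own argument (Proposition \ref{pr:de}): the two instances of Proposition \ref{pr:mq} you invoke are exactly (\ref{k:mk}) and (\ref{k:mk2}), the pairwise decomposition is Lemma \ref{le:wick}, and subtracting the empty-$B$ instance to cancel the internal pair energies and the exponent $i$ reproduces (\ref{izm3}), which is equivalent to (\ref{con}) via the Q-system. That part stands.

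The genuine gap is the right boundary condition $h^{(a)}_\infty=\nu_a$, which you correctly identify as the crux but do not prove. Your plan is to evaluate $\lim_{i\to\infty}h^{(a)}_i$ directly from the Littlewood--Richardson expansion (\ref{k:hhh}) for \emph{general} $(r,s)$ and match it against $y_a\frac{\partial}{\partial y_a}\log\overline{Q}^{(r)}_s$. For general $r$ the set $\mathcal{P}^{(a,r)}_{i,s}$ and the weight $\sum_{j>\max(a,r)}\nu_j$ do not admit the simple closed description available when $r=1$, and the required asymptotic factorization of $s_\nu/s_{(i^a)}$ together with the identification of the resulting sum with the $y_a$-grading of the rectangular tableau sum for $\overline{Q}^{(r)}_s$ is precisely the nontrivial content of Proposition \ref{pr:bcd}; announcing that ``this is where the real work lies'' does not discharge it. The paper circumvents this head-on computation: it exploits the symmetry $h^{(a,r)}_{i,s}=h^{(r,a)}_{s,i}$ (from $R$-invariance of weights and $H$) to convert the already-established difference relation in $(a,i)$ into a second-order difference relation in $(r,s)$ for $h^{(a,r)}_{\infty,s}$ (Lemma \ref{le:erk}), shows that $\zeta^{(a,r)}_s=y_a\frac{\partial}{\partial y_a}\log\overline{Q}^{(r)}_s$ satisfies the identical relation by differentiating the Q-system (Lemma \ref{le:slk}), and thereby reduces everything to the single case $r=1$, where the explicit Schur-polynomial asymptotics (\ref{k:cdk})--(\ref{k:hs}) are elementary. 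Either you carry out the general-$(r,s)$ asymptotics you sketch, or you need an inductive device of this kind; as written the proposal establishes only half of the theorem.
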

Theorem \ref{thk:main} shows a nontrivial coincidence of the
equilibrium configuration, i.e., the $n$-tuple of Young diagrams,  
determined by two different approaches:

\begin{itemize}

\item
stationary local energy for 
the Markov process of carriers in the randomized BBS,

\item
difference equation arising from the 
TBA analysis of the Fermionic formula.

\end{itemize}
The result may be regarded as 
{\em randomized} version of the 
soliton/string correspondence (\ref{k:ii}).
Being able to give an explicit formula for $\varepsilon^{(a)}_i$ 
is a very rare event   
in the actual TBA analyses involving the spectral parameter.

For simplicity we temporarily write 
the $w$-deformed character (\ref{k:xw}) as
\begin{align}\label{k:sn}
Q^{(k_1)}_{l_1} \ast \cdots \ast Q^{(k_L)}_{l_L}
= \chi_w(B^{(k_1)}_{l_1} \otimes \cdots 
\otimes B^{(k_L)}_{l_L}).
\end{align}
In this notation, Lemma \ref{le:wick} reads as
\begin{align}\label{k:rw}
\frac{\partial \log(Q_1\ast \cdots \ast Q_L)}{\partial w}
\Big|_{w=1} &= \sum_{1 \le i < j \le L}
\frac{\partial \log(Q_i \ast Q_j)}{\partial w} 
\Big|_{w=1}.
\end{align} 
Then Theorem \ref{thk:main} is summarized in the following formula
for the $1/L$-scaled Young diagrams:
\begin{align}\label{k:es}
\varepsilon^{(a)}_i = 
\frac{\partial \log(Q^{(a)}_i \ast Q^{(r)}_s)}{\partial w} 
\Big|_{w=1}.
\end{align}
From this and (\ref{k:es5}) the quantity 
$\eta^{(a)}_i := \varepsilon^{(a)}_i - \varepsilon^{(a)}_{i-1}$
has the meaning and the explicit formula as
\begin{equation}\label{k:eta}
\eta^{(a)}_i= \lim_{L \rightarrow \infty}
\frac{1}{L}\bigl(\text{$\#$ of boxes in the $i$-th column of 
$\mu_a$}\bigr)=
\frac{\partial}{\partial w} \Biggl(\log
\frac{Q^{(a)}_i \ast Q^{(r)}_s}
{Q^{(a)}_{i-1} \ast Q^{(r)}_s} \Biggr)
\Bigg|_{w=1}.
\end{equation}

\subsection{Proof of Theorem \ref{thk:main}}

First we prove
\begin{proposition}\label{pr:de}
$\varepsilon^{(a)}_i = h^{(a)}_i$ (\ref{k:izm}) 
provides a solution to
the difference equation (\ref{con}).
\end{proposition}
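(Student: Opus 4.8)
The plan is to read off the difference equation (\ref{con}) directly from the $w$-deformed Q-system of Proposition \ref{pr:mq}, using Lemma \ref{le:wick} to convert everything into logarithmic $w$-derivatives. The entry point is that, by Lemma \ref{le:wick} at $L=2$ together with (\ref{k:izm}),
\[
\varepsilon^{(a)}_i = h^{(a)}_i = \frac{\partial}{\partial w}\log\chi_w(B^{(a)}_i\otimes B^{(r)}_s)\Big|_{w=1},
\]
and, more importantly, that (\ref{k:rw}) expresses $\frac{\partial}{\partial w}\log\chi_w|_{w=1}$ of any tensor product as a sum of pairwise contributions $\frac{\partial}{\partial w}\log\chi_w(B_i\otimes B_j)|_{w=1}$, each depending only on the unordered pair of tensor factors and not on the surrounding ones.

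First I would apply Proposition \ref{pr:mq} for the fixed $(a,i)$ in two cases: with the spectator $B = B^{(r)}_s$, and with $B$ empty. Taking $\frac{\partial}{\partial w}\log(\cdot)|_{w=1}$ of each identity and evaluating the three $\chi_w$ terms at $w=1$ through (\ref{k:kq}) recovers the Q-system (\ref{qsys}); consequently, after dividing the differentiated identity by $\chi_w(B^{(a)}_i\otimes B^{(a)}_i\otimes B)\big|_{w=1}$, the coefficients of the two surviving terms become $\frac{Q^{(a)}_{i-1}Q^{(a)}_{i+1}}{(Q^{(a)}_i)^2}=\frac{Y^{(a)}_i}{1+Y^{(a)}_i}$ and $\frac{\prod_{b\sim a}Q^{(b)}_i}{(Q^{(a)}_i)^2}=\frac{1}{1+Y^{(a)}_i}$ by (\ref{yqw}), while the power $w^{\phi}$ produces the additive term $\frac{\phi}{1+Y^{(a)}_i}$. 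Expanding each $\frac{\partial}{\partial w}\log\chi_w|_{w=1}$ into pairwise pieces via (\ref{k:rw}), the pairings of a factor with the spectator $B^{(r)}_s$ yield exactly $\varepsilon^{(a)}_i,\varepsilon^{(a)}_{i\pm1}$ and $\sum_{b\sim a}\varepsilon^{(b)}_i$, whereas the remaining pairings internal to $\{B^{(a)}_i,B^{(a)}_{i\pm1},(B^{(b)}_i)_{b\sim a}\}$ are independent of $(r,s)$.

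The decisive step is to subtract the empty-$B$ identity from the $B = B^{(r)}_s$ one. Since the internal pairings are literally identical in the two applications, they cancel, and the exponents contribute only $\phi - i = \delta_{a,r}\min(i,s)$. What remains is
\[
2\varepsilon^{(a)}_i = \frac{Y^{(a)}_i}{1+Y^{(a)}_i}\bigl(\varepsilon^{(a)}_{i-1}+\varepsilon^{(a)}_{i+1}\bigr) + \frac{\delta_{a,r}\min(i,s)}{1+Y^{(a)}_i} + \frac{1}{1+Y^{(a)}_i}\sum_{b\sim a}\varepsilon^{(b)}_i,
\]
and multiplying by $1+Y^{(a)}_i$ and using $2\varepsilon^{(a)}_i-\sum_{b\sim a}\varepsilon^{(b)}_i=\sum_{b=1}^nC_{ab}\varepsilon^{(b)}_i$ turns this into (\ref{con}) verbatim.

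The main obstacle is purely organizational: keeping track of the pairwise decomposition and checking that the internal pairings match term by term across the two instances of Proposition \ref{pr:mq}, which is exactly what makes the cancellation exact and rests on the context-independence of the pairwise energies in Lemma \ref{le:wick}. A secondary point to verify is the boundary case $i=1$, where $B^{(a)}_0$ is dropped and one uses $Q^{(a)}_0=1$, $Q^{(a)}_{-1}=0$; these are consistent with the prescribed boundary value $\varepsilon^{(a)}_0=0$, so the argument goes through unchanged at the left edge.
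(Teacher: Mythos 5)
Your proof is correct and follows essentially the same route as the paper: both arguments rest on the two specializations of Proposition \ref{pr:mq} (spectator $B^{(r)}_s$ versus empty $B$), differentiate at $w=1$, and invoke Lemma \ref{le:wick} together with (\ref{yqw}), your subtraction of the two identities being exactly the cancellation the paper carries out via (\ref{k:mk2}). The only difference is presentational --- the paper clears denominators and verifies the polynomial identity (\ref{izm3}), while you keep the divided form and land on (\ref{con}) directly.
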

\begin{proof}
By substitution of  
(\ref{k:izm}) and the formula (\ref{yqw}) for $Y^{(a)}_i$, 
the equation (\ref{con}) becomes 
\begin{equation}\label{izm2}
\begin{split}
&\delta_{a,r}\min(i,s) - \sum_{b=1}^nC_{ab}
\frac{\sum H(x^{(b)}_i \otimes y)
e^{\mathrm{wt}(x^{(b)}_i \otimes y)}}
{Q^{(b)}_iQ^{(r)}_s} 
=\frac{Q^{(a)}_{i-1}Q^{(a)}_{i+1}}{Q^{(r)}_s
\prod_{b \sim a} Q^{(b)}_i}
\\
&\times
\left(
-\frac{\sum H(x^{(a)}_{i-1} \otimes y)
e^{\mathrm{wt}(x^{(a)}_{i-1}\otimes y)}}{Q^{(a)}_{i-1}}
+2\frac{\sum H(x^{(a)}_i \otimes y)
e^{\mathrm{wt}(x^{(a)}_i\otimes y)}}{Q^{(a)}_{i}}
-\frac{\sum H(x^{(a)}_{i+1} \otimes y)
e^{\mathrm{wt}(x^{(a)}_{i+1}\otimes y)}}{Q^{(a)}_{i+1}}
\right).
\end{split}
\end{equation}
Here and in what follows $x^{(c)}_j$ and $y$ should always be summed over 
$B^{(c)}_j$ and $B^{(r)}_s$, respectively.
By removing the denominators, this is cast into 
\begin{equation}\label{izm3}
\begin{split}
\delta_{a,r}\min(i,s)Q^{(r)}_s\prod_{b \sim a}Q^{(b)}_i
&=2 Q^{(a)}_i \sum H(x^{(a)}_i\otimes y)
e^{\mathrm{wt}(x^{(a)}_i \otimes y)}\\
&-\sum_{b\sim a}\Bigl(\prod_{c\sim a, c \neq b}Q^{(c)}_i\Bigr)
\sum H(x^{(b)}_i \otimes y)
e^{\mathrm{wt}(x^{(b)}_i \otimes y)}
\\
&- Q^{(a)}_{i+1}\sum H(x^{(a)}_{i-1} \otimes y) 
e^{\mathrm{wt}(x^{(a)}_{i-1}\otimes y)}
- Q^{(a)}_{i-1}\sum H(x^{(a)}_{i+1}\otimes y)
e^{\mathrm{wt}(x^{(a)}_{i+1}\otimes y)}.
\end{split}
\end{equation}
In the derivation, we have used the Q-system (\ref{qsys}) to 
cancel a factor $Q^{(a)}_i$ in the first term of the RHS.
In order to verify (\ref{izm3}) we consider the two special cases of (\ref{hko}):
\begin{align}
&Q^{(a)}_i \ast Q^{(a)}_i \ast Q^{(r)}_s 
- Q^{(a)}_{i-1}\ast Q^{(a)}_{i+1} \ast Q^{(r)}_s  
- w^\kappa \prod_{b \sim a}
Q^{(b)}_i\ast Q^{(r)}_s = 0,
\label{k:mk}\\
&
Q^{(a)}_i \ast Q^{(a)}_i 
- Q^{(a)}_{i-1} \ast Q^{(a)}_{i+1}
-w^i \prod_{b \sim a} Q^{(b)}_i = 0,
\label{k:mk2}
\end{align}
where $\kappa = i+\delta_{a,r}\min(i,s)$ and 
the product over $b$ means  the one by $\ast$.
Take the $w$-derivative of (\ref{k:mk}) at $w=1$.
By means of Lemma \ref{le:wick} or equivalently (\ref{k:rw}), it leads to
\begin{equation}\label{qr1}
\begin{split}
0 &= 2Q^{(a)}_i\sum H(x^{(a)}_i \otimes y) 
e^{\mathrm{wt}(x^{(a)}_i \otimes y)} 
- Q^{(a)}_{i+1}\sum H(x^{(a)}_{i-1} \otimes y)
 e^{\mathrm{wt}(x^{(a)}_{i-1} \otimes y)} \\
&- Q^{(a)}_{i-1}\sum H(x^{(a)}_{i+1} \otimes y)
 e^{\mathrm{wt}(x^{(a)}_{i+1} \otimes y)} 
-\kappa \Bigl(\prod_{b \sim a}Q^{(b)}_i\Bigr)  Q^{(r)}_s 
- \sum_{b \sim a}\Bigl(\prod_{c\sim a, c \neq b}Q^{(c)}_i\Bigr)
H(x^{(b)}_i \otimes y)
 e^{\mathrm{wt}(x^{(b)}_{i} \otimes y)} 
\\
&+\Biggl(
\frac{\partial (Q^{(a)}_i \ast Q^{(a)}_i)}{\partial w}
- \frac{\partial (Q^{(a)}_{i-1} \ast Q^{(a)}_{i+1})}{\partial w}
- \sum_{b,c \sim a}
\Bigl(\prod_{d\sim a, d \neq b,c}Q^{(d)}_i\Bigr)
\frac{\partial (Q^{(b)}_i \ast Q^{(c)}_i)}{\partial w}
\Biggr)\Big|_{w=1}Q^{(r)}_s.
\end{split}
\end{equation}
The same calculation for (\ref{k:mk2}) tells that 
the quantity in the big parenthesis of the 
last line of (\ref{qr1}) is equal 
to $i\prod_{b \sim  a}Q^{(b)}_i$ at $w=1$.
Therefore this term cancels the $\kappa$ term on the second line of (\ref{qr1}) partially.
The resulting relation is nothing but (\ref{izm3}).
\end{proof}

Next we verify the boundary condition
$h^{(a)}_0=0, h^{(a)}_\infty=\nu_a$.
As the former is obvious, we concentrate on the latter.
From (\ref{k:mhn}) and (\ref{k:izm}) 
the boundary condition $h^{(a)}_\infty=\nu_a$ in question is stated as
\begin{proposition}\label{pr:bcd}
\begin{align}\label{k:otm}
\lim_{i\rightarrow \infty}
\frac{\sum_{x \otimes y
\in B^{(a)}_i \otimes B^{(r)}_s}H(x \otimes y)
e^{\mathrm{wt}(x \otimes y)}}
{Q^{(a)}_i Q^{(r)}_s}
= y_a \frac{\partial}{\partial y_a}\log \overline{Q}^{(r)}_s.
\end{align}
\end{proposition}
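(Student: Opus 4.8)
The plan is to evaluate the left-hand side of (\ref{k:otm}), namely $\lim_{i\to\infty}h^{(a)}_i$, by feeding the explicit form (\ref{k:hhh}) of the stationary local energy into a Schur-polynomial asymptotic analysis, and then to match the outcome with the right-hand side through the equation of state. First I would record, via Lemma \ref{le:wick} (equivalently (\ref{k:hhh})), that
\begin{align}
h^{(a)}_i=\frac{\sum_{\nu\in\mathcal{P}^{(a,r)}_{i,s}}\bigl(\sum_{j>\max(a,r)}\nu_j\bigr)\,s_\nu(w_1,\ldots,w_{n+1})}{\sum_{\nu\in\mathcal{P}^{(a,r)}_{i,s}}s_\nu(w_1,\ldots,w_{n+1})},
\end{align}
the denominator being $Q^{(a)}_iQ^{(r)}_s$. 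The crucial feature of the regime under consideration, $\alpha_a>0$ for all $a$, is that it forces $w_1>w_2>\cdots>w_{n+1}>0$: indeed $\log z_a$ is then strictly concave with $\log z_0=\log z_{n+1}=0$, so $\log w_j=\log z_j-\log z_{j-1}$ is strictly decreasing. This strict ordering is what makes the large-$i$ limit tractable.

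The key asymptotic step is a factorization. Every $\nu\in\mathcal{P}^{(a,r)}_{i,s}$ contains $(i^a)$; write $\nu^{\mathrm{top}}=(\nu_j-i)_{j\le a}$ and $\nu^{\mathrm{bot}}=(\nu_j)_{j>a}$. Using the bialternant formula $s_\nu=\det(w_k^{\nu_j+n+1-j})/\det(w_k^{n+1-j})$ and the ordering $w_1>\cdots>w_{n+1}$, the exponents $\nu_j+n+1-j$ with $j\le a$ are of order $i$, so the dominant permutations in the numerator determinant are exactly those preserving the block decomposition $\{1,\ldots,a\}\sqcup\{a+1,\ldots,n+1\}$ (any other permutation is suppressed by a factor $(w_{a+1}/w_a)^i\to0$). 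Dividing by the same expansion for $(i^a)$ gives
\begin{align}
\lim_{i\to\infty}\frac{s_\nu(w_1,\ldots,w_{n+1})}{Q^{(a)}_i}=s_{\nu^{\mathrm{top}}}(w_1,\ldots,w_a)\,s_{\nu^{\mathrm{bot}}}(w_{a+1},\ldots,w_{n+1}).
\end{align}
For $i$ large the index set $\mathcal{P}^{(a,r)}_{i,s}$ stabilizes and is in bijection with the pairs $(\nu^{\mathrm{top}},\nu^{\mathrm{bot}})$ occurring in the $GL_{n+1}\downarrow GL_a\times GL_{n+1-a}$ branching of the rectangle $(s^r)$; summing the displayed limit over a finite index set and using $\sum_\nu s_\nu=Q^{(a)}_iQ^{(r)}_s$ yields the identity $\sum_{\nu}s_{\nu^{\mathrm{top}}}(w_1,\ldots,w_a)\,s_{\nu^{\mathrm{bot}}}(w_{a+1},\ldots,w_{n+1})=Q^{(r)}_s$.

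It remains to carry the energy $\sum_{j>\max(a,r)}\nu_j$ through this limit, for which I treat two cases. If $r\le a$ the energy equals $|\nu^{\mathrm{bot}}|$, and applying the degree operator $\sum_{c>a}D_c$ (with $D_c=w_c\,\partial/\partial w_c$) to the branching identity gives $\lim_i h^{(a)}_i=\bigl(\sum_{c>a}D_c\bigr)Q^{(r)}_s/Q^{(r)}_s=\sum_{c>a}\langle\#_c\rangle$, where $\langle\#_c\rangle:=D_cQ^{(r)}_s/Q^{(r)}_s$ is the mean number of letters $c$ in a site variable (since $e^{\mathrm{wt}}=\prod_c w_c^{\#_c}$). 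If $r>a$ the rectangular complementation $\nu^{\mathrm{bot}}=(s^{r-a},s-\nu^{\mathrm{top}}_a,\ldots,s-\nu^{\mathrm{top}}_1)$ converts the energy into $as-|\nu^{\mathrm{top}}|$, and the analogous manipulation gives $\lim_i h^{(a)}_i=as-\sum_{c\le a}\langle\#_c\rangle$. In either case the equilibrium content formula (\ref{k:ew}), $\langle\#_c\rangle=s\theta(c\le r)+\nu_{c-1}-\nu_c$ (a consequence of the equation of state (\ref{est})), telescopes to $\lim_i h^{(a)}_i=\nu_a$, which by (\ref{k:mhn}) is exactly $y_a\,\partial_{y_a}\log\overline{Q}^{(r)}_s$. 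The main obstacle is the asymptotic factorization: one must justify rigorously, from $w_1>\cdots>w_{n+1}$, that only block-preserving permutations survive in the bialternant and that $\mathcal{P}^{(a,r)}_{i,s}$ stabilizes, so that the limit may be taken termwise in the finite sum; the bookkeeping of the energy in the case $r>a$ via rectangular complementation is the secondary technical point.
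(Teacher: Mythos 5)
Your argument is correct, but it takes a genuinely different route from the paper's. The paper does \emph{not} attack the general $(r,s)$ limit head-on by Schur asymptotics. Instead it first records the symmetry $h^{(a,r)}_{i,s}=h^{(r,a)}_{s,i}$ coming from the invariance of $\mathrm{wt}$ and $H$ under the combinatorial $R$, swaps the roles of $(a,i)$ and $(r,s)$ in the already-established difference equation of Proposition \ref{pr:de}, and lets $i\to\infty$ to obtain a second-order recursion in $r$ for $h^{(a,r)}_{\infty,s}$ (Lemma \ref{le:erk}); a parallel recursion for $\zeta^{(a,r)}_s=y_a\partial_{y_a}\log\overline{Q}^{(r)}_s$ is read off from the Q-system (Lemma \ref{le:slk}). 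Since both quantities vanish at $r=0$, the identity reduces to the single case $r=1$, where $\mathcal{P}^{(a,1)}_{i,s}$ is just a Pieri sum, the block factorization of $s_{(i+s-k,i^{a-1},k)}/s_{(i^a)}$ is carried out exactly as in your key asymptotic step, and the result is matched against the semistandard-tableau expansion of $\overline{Q}^{(1)}_s$. You instead run the Schur asymptotics for arbitrary $(r,s)$, which forces you to handle the full rectangle-times-rectangle Littlewood--Richardson set, the stabilization of that set for large $i$, and the rectangular complementation $\nu^{\mathrm{bot}}=(s^{r-a},s-\nu^{\mathrm{top}}_a,\ldots,s-\nu^{\mathrm{top}}_1)$ needed to convert the energy when $r>a$; in exchange you avoid the two reduction lemmas and obtain a transparent probabilistic reading of the limit as $\sum_{c>a}\langle\#_c\rangle$ (resp.\ $as-\sum_{c\le a}\langle\#_c\rangle$), which telescopes to $\nu_a$ directly from the equation of state. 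Both routes rest on the same two analytic inputs (the strict ordering $w_1>\cdots>w_{n+1}$ in the regime $\alpha_a>0$, and termwise limits over a finite stabilized index set), and the technical points you flag are exactly the ones the paper also relies on, only in the simpler $r=1$ setting; one caution is that your bookkeeping must use the genuine LR components (with the complementarity constraint between $\nu^{\mathrm{top}}$ and $\nu^{\mathrm{bot}}$), which is slightly finer than the containment-and-size description of $\mathcal{P}^{(a,r)}_{i,s}$ displayed in Section \ref{ss:sle}.
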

Proposition \ref{pr:bcd} 
turns out to be reducible to some simple cases.
To demonstrate it we utilize the $(r,s)$-dependence of $h^{(a)}_i$ (\ref{k:izm}),
hence exhibit it as
\begin{align}\label{a:hde}
h^{(a,r)}_{i,s} = h^{(r,a)}_{s,i}
= \frac{\sum_{x \otimes y
\in B^{(a)}_i \otimes B^{(r)}_s}H(x \otimes y)
e^{\mathrm{wt}(x \otimes y)}}
{Q^{(a)}_i Q^{(r)}_s}.
\end{align}
Here the symmetry under the exchange of the indices is due to the
invariance of weights and the local energy $H$ 
by the combinatorial $R$. 

\begin{lemma}\label{le:erk}
\begin{align}\label{a:hrec}
s\delta_{a,r} = 2y_r^{-s}\prod_{t=1}^n
\bigl(\overline{Q}^{(t)}_s\bigr)^{C_{rt}}
h^{(a,r)}_{\infty, s}
- y^{-s}_r
\frac{\overline{Q}^{(r)}_{s+1}\overline{Q}^{(r)}_{s-1}}
{\prod_{t\sim r}\overline{Q}^{(t)}_s}
\bigl(h^{(a,r)}_{\infty, s-1}+h^{(a,r)}_{\infty, s+1}\bigr)
-\sum_{t\sim r}h^{(a,t)}_{\infty,s}.
\end{align}
\end{lemma}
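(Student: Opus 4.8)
The plan is to recognize the claimed identity \eqref{a:hrec} as nothing but the difference equation \eqref{con} with the roles of the carrier index $(a,i)$ and the local-state index $(r,s)$ interchanged, taken in the limit $i\to\infty$ and rewritten through \eqref{yqw} and \eqref{a:qb}. The three ingredients I would use are: the symmetry $h^{(a,r)}_{i,s}=h^{(r,a)}_{s,i}$ recorded in \eqref{a:hde}; Proposition \ref{pr:de}, which is valid for \emph{any} assignment of carrier and quantum space because the fugacity $z_1,\dots,z_n$, and hence the functions $Y^{(a)}_i$ and $Q^{(a)}_i$, is universal by Proposition \ref{pr:st}; and the elementary rewriting of $Y^{(r)}_s$ and $1+Y^{(r)}_s$ in terms of $\overline{Q}$.

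First I would apply Proposition \ref{pr:de} to the randomized BBS whose carrier is taken from $B^{(r)}_s$ and whose local states are taken from $B^{(a)}_i$. This is the same proposition applied to a legitimate BBS, so for every fixed $i$ its difference equation \eqref{con}, now read in the carrier index $s$, gives
\[
\delta_{r,a}\,\min(s,i)-\sum_{t=1}^n C_{rt}\,h^{(t,a)}_{s,i}
=Y^{(r)}_s\bigl(-h^{(r,a)}_{s-1,i}+2h^{(r,a)}_{s,i}-h^{(r,a)}_{s+1,i}\bigr).
\]
Using $h^{(t,a)}_{s,i}=h^{(a,t)}_{i,s}$ on every term and expanding $\sum_t C_{rt}h^{(a,t)}_{i,s}=2h^{(a,r)}_{i,s}-\sum_{t\sim r}h^{(a,t)}_{i,s}$ turns this into a finite-$i$ identity among the $h^{(a,\cdot)}_{i,\cdot}$.

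Next I would let $i\to\infty$. Since $\min(s,i)=s$ already for $i\ge s$, the inhomogeneous term is exactly $s\delta_{a,r}$, and each of the finitely many quantities $h^{(a,t)}_{i,s}$ and $h^{(a,r)}_{i,s\pm1}$ converges to the corresponding limit $h^{(a,\cdot)}_{\infty,\cdot}$, yielding
\[
s\delta_{a,r}=2\bigl(1+Y^{(r)}_s\bigr)h^{(a,r)}_{\infty,s}
-Y^{(r)}_s\bigl(h^{(a,r)}_{\infty,s-1}+h^{(a,r)}_{\infty,s+1}\bigr)
-\sum_{t\sim r}h^{(a,t)}_{\infty,s}.
\]
It then remains to convert the coefficients. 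From \eqref{yqw} together with $\overline{Q}^{(t)}_s=z_t^{-s}Q^{(t)}_s$ and $y_r=\prod_b z_b^{-C_{rb}}$ in \eqref{a:qb}, one checks the two purely algebraic identities $1+Y^{(r)}_s=y_r^{-s}\prod_{t=1}^n(\overline{Q}^{(t)}_s)^{C_{rt}}$ and $Y^{(r)}_s=y_r^{-s}\overline{Q}^{(r)}_{s-1}\overline{Q}^{(r)}_{s+1}\big/\prod_{t\sim r}\overline{Q}^{(t)}_s$, the powers of $z$ cancelling precisely because $y_r^{-s}$ carries the compensating exponents. Substituting these produces \eqref{a:hrec}.

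I expect the genuine obstacle to be the passage $i\to\infty$: one must know that $h^{(a,r)}_{\infty,s}=\lim_{i\to\infty}h^{(a,r)}_{i,s}$ exists before reading the displayed relation as an identity among limits. I would secure this by noting that $H$ is bounded, since for $x\otimes y\in B^{(a)}_i\otimes B^{(r)}_s$ the shape of $y\cdot x$ contains $(i^a)$, so at most $sr$ boxes lie below its $\max(a,r)$-th row and hence $H(x\otimes y)\le sr$, and that the character ratios in \eqref{k:hhh} stabilize as $i\to\infty$ by the asymptotics $Q^{(a)}_{i+1}/Q^{(a)}_i\to e^{\varpi_a}$ of \cite[Theorem 7.1]{HKOTY}. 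The remaining steps, namely the application of Proposition \ref{pr:de} in the swapped channel and the two $z$-exponent cancellations above, are routine bookkeeping.
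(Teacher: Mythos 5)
Your proposal is correct and follows essentially the same route as the paper: the paper likewise takes the already established relation \eqref{izm3} written in terms of $h^{(a,r)}_{i,s}$, exchanges $(a,i)\leftrightarrow(r,s)$ using the symmetry \eqref{a:hde}, lets $i\to\infty$, and substitutes $Q^{(r)}_s=z_r^s\overline{Q}^{(r)}_s$. Your extra remarks on why the limit $h^{(a,r)}_{\infty,s}$ exists (boundedness of $H$ and stabilization of the character ratios) are a welcome refinement of a step the paper passes over silently, but they do not change the argument.
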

\begin{proof}
In terms of $h^{(a,r)}_{i,s}$ in (\ref{a:hde}), 
the already established relation
(\ref{izm2}) or equivalently (\ref{izm3}) 
reads
\begin{align*}
\delta_{a,r}\min(i,s) = 2\prod_{b=1}^n\bigl(Q^{(b)}_i\bigr)^{C_{ab}}
h^{(a,r)}_{i,s}
-\frac{Q^{(a)}_{i+1}Q^{(a)}_{i-1}}{\prod_{b \sim a}Q^{(b)}_i}
(h^{(a,r)}_{i-1,s}+h^{(a,r)}_{i+1,s})
-\sum_{b \sim a} h^{(b,r)}_{i,s}.
\end{align*}
Exchange the indices $(a,i) \leftrightarrow (r,s)$ here and apply the 
symmetry $h^{(r,a)}_{s,i} = h^{(a,r)}_{i,s}$.
Then (\ref{a:hrec}) follows from it by taking the limit $i \rightarrow \infty$
and substituting $Q^{(r)}_s = z^s_r \overline{Q}^{(r)}_s$.
\end{proof}

\begin{lemma}\label{le:slk}
Let 
$\zeta^{(a,r)}_s 
= y_a \frac{\partial}{\partial y_a}\log \overline{Q}^{(r)}_s$
be the RHS of (\ref{k:otm}).
It satisfies
\begin{align}
s\delta_{a,r} = 2y_r^{-s}\prod_{t=1}^n
\bigl(\overline{Q}^{(t)}_s\bigr)^{C_{rt}}
\zeta^{(a,r)}_{s}
- y^{-s}_r
\frac{\overline{Q}^{(r)}_{s+1}\overline{Q}^{(r)}_{s-1}}
{\prod_{t\sim r}\overline{Q}^{(t)}_s}
\bigl(\zeta^{(a,r)}_{s-1}+\zeta^{(a,r)}_{s+1}\bigr)
-\sum_{t\sim r}\zeta^{(a,t)}_{s}.
\end{align}
\end{lemma}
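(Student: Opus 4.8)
The plan is to mirror the proof of Lemma \ref{le:erk}: there the recurrence for $h^{(a,r)}_{\infty,s}$ was extracted by differentiating the $w$-deformed Q-system (\ref{hko}) in $w$ and sending $i\to\infty$. Here I would obtain the \emph{identical} recurrence for $\zeta^{(a,r)}_s = y_a\frac{\partial}{\partial y_a}\log\overline{Q}^{(r)}_s$ by applying the logarithmic derivation $D_a := y_a\frac{\partial}{\partial y_a}$ to the ordinary Q-system (\ref{qsys}), read in the independent variables $y_1,\ldots,y_n$. The key preliminary is to recast (\ref{qsys}) in terms of the normalized characters $\overline{Q}^{(r)}_s = z_r^{-s}Q^{(r)}_s$ of (\ref{a:qb}).

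First I would substitute $Q^{(r)}_s = z_r^s\overline{Q}^{(r)}_s$ into (\ref{qsys}). Using $\prod_{b=1}^n z_b^{C_{rb}} = e^{\alpha_r} = y_r^{-1}$ (so that $\prod_{t\sim r}z_t/z_r^2 = y_r$), the powers of $z_r$ cancel on the two quadratic terms while the coupling term collects a factor $y_r^s$, giving the clean form
\begin{equation*}
(\overline{Q}^{(r)}_s)^2 = \overline{Q}^{(r)}_{s-1}\overline{Q}^{(r)}_{s+1} + y_r^s\prod_{t\sim r}\overline{Q}^{(t)}_s.
\end{equation*}

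Next I would apply $D_a$ to this relation. Since $D_a$ is a derivation, $D_aX = X\,D_a\log X$ on each factored expression, and since $y_1,\ldots,y_n$ are independent one has $D_a\log y_r = \delta_{a,r}$. This produces
\begin{equation*}
2(\overline{Q}^{(r)}_s)^2\zeta^{(a,r)}_s = \overline{Q}^{(r)}_{s-1}\overline{Q}^{(r)}_{s+1}(\zeta^{(a,r)}_{s-1}+\zeta^{(a,r)}_{s+1}) + y_r^s\Bigl(\prod_{t\sim r}\overline{Q}^{(t)}_s\Bigr)\Bigl(s\delta_{a,r}+\sum_{t\sim r}\zeta^{(a,t)}_s\Bigr).
\end{equation*}
Isolating the $s\delta_{a,r}$ term and dividing by $y_r^s\prod_{t\sim r}\overline{Q}^{(t)}_s$ then reproduces the asserted identity, once the leading coefficient is rewritten via $\prod_{t=1}^n(\overline{Q}^{(t)}_s)^{C_{rt}} = (\overline{Q}^{(r)}_s)^2/\prod_{t\sim r}\overline{Q}^{(t)}_s$.

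There is no genuine obstacle here: the statement is a formal consequence of the Q-system, and the computation is dual to that of Lemma \ref{le:erk} with $\partial/\partial w$ on (\ref{hko}) replaced by $D_a$ on (\ref{qsys}). The only place demanding care is the bookkeeping of the $z$-powers when passing to the $\overline{Q}$-variables — in particular confirming that the coupling term acquires exactly $y_r^s$ and that the final leading coefficient matches $2y_r^{-s}\prod_{t=1}^n(\overline{Q}^{(t)}_s)^{C_{rt}}$. The point of the lemma is that the resulting recurrence for $\zeta^{(a,r)}_s$ is literally the one satisfied by $h^{(a,r)}_{\infty,s}$ in Lemma \ref{le:erk}, which is exactly what is needed to identify the two quantities in the concluding step of the proof of Proposition \ref{pr:bcd}.
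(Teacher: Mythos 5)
Your proposal is correct and is essentially the paper's own proof: the paper likewise rewrites the Q-system in the normalized variables as $(\overline{Q}^{(r)}_s)^2 = \overline{Q}^{(r)}_{s-1}\overline{Q}^{(r)}_{s+1} + y^s_r \prod_{t \sim r}\overline{Q}^{(t)}_s$ and applies $y_a\frac{\partial}{\partial y_a}$. Your version merely spells out the $z$-power bookkeeping (correctly, via $\prod_{t\sim r}z_t/z_r^2 = y_r$) that the paper leaves implicit.
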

\begin{proof}
The Q-system (\ref{qsys}) becomes
$(\overline{Q}^{(r)}_s)^2 = 
\overline{Q}^{(r)}_{s-1}\overline{Q}^{(r)}_{s+1}
+ y^s_r \prod_{t \sim r}\overline{Q}^{(t)}_s$
in terms of the variables in (\ref{a:qb}).
This is an identity in $\Z[y_1,\ldots, y_n]$.
The assertion follows from it by taking the derivative
$y_a \frac{\partial}{\partial y_a}$.
\end{proof}

\vspace{0.2cm}\noindent
{\em Proof of Proposition \ref{pr:bcd}}.
From Lemma \ref{le:erk} and Lemma \ref{le:slk}, 
the quantities $h^{(a,r)}_{\infty,s}$ and $\zeta^{(a,r)}_{s}$
obey the same difference relation with respect to $r$
which is at most of 
second order since $t \sim r$ means $t \in \{r \pm 1\}\cap [1,n]$.
Moreover they are both $0$ at $r=0$.
Therefore $h^{(a,r)}_{\infty,s}=\zeta^{(a,r)}_{s}$ (\ref{k:otm}) 
follows from the $r=1$ case.
It can be slightly rewritten by (\ref{a:qb}) as 
\begin{align}\label{k:mka}
\lim_{i\rightarrow \infty}
\frac{\sum_{x \otimes y
\in B^{(a)}_i \otimes B^{(1)}_s}H(x \otimes y)
e^{\mathrm{wt}(x \otimes y)}}{Q^{(a)}_i z_1^s}
= y_a \frac{\partial \overline{Q}^{(1)}_s}{\partial y_a}.
\end{align} 
In the sequel we prove (\ref{k:mka}).
From the $w$-derivative of (\ref{k:ex1}) at $w=1$, 
the LHS of (\ref{k:mka}) 
with fixed $i ( \ge s)$ 
is expressed as
\begin{align}\label{k:cdk}
\frac{\sum_{x \otimes y
\in B^{(a)}_i \otimes B^{(1)}_s}H(x \otimes y)
e^{\mathrm{wt}(x \otimes y)}}{Q^{(a)}_i z_1^s}
= \sum_{k=1}^s k
\frac{s_{(i+s-k,i^{a-1}, k)}(w_1,\ldots, w_{n+1})}
{s_{(i^{a})}(w_1,\ldots, w_{n+1}) \,z^{s}_1}.
\end{align}
In the regime $\prod_{b=1}^nz_b^{C_{ab}}>1$ under consideration,
the variables $w_a$ in (\ref{k:wde}) satisfy
$w_1> w_2 > \cdots > w_{n+1}$.
Then the large $i$ limit of each summand in RHS of (\ref{k:cdk}) is 
easily extracted from the determinantal formula in (\ref{k:sc}).
It is decomposed into a product of 
Schur polynomials, which leads to 
\begin{equation}\label{k:hs}
\begin{split}
\text{LHS of (\ref{k:mka})}
&= z^{-s}_1 \sum_{k=1}^s k\, s_{(s-k)}(w_1,\ldots, w_a)
s_{(k)}(w_{a+1},\ldots, w_n)\\
&=\sum_{k=0}^s k\,
s_{(s-k)}(\overline{w}_1,\ldots, \overline{w}_a)
s_{(k)}(\overline{w}_{a+1},\ldots, \overline{w}_n).
\end{split}
\end{equation}
We have set 
$\overline{w}_a=w_a/z_1 = y_1y_2\cdots y_{a-1}$.
See (\ref{k:wde}) and (\ref{a:qb}).
As for RHS of (\ref{k:mka}), 
we invoke the formula for $\overline{Q}^{(1)}_s$ 
as the sum over semistandard tableaux on the Young diagram
with length $s$ single row shape.
The entry $b \in [1,n+1]$ of the tableaux corresponds to 
$\overline{w}_b$.  
Therefore for any $a \in [1,n]$ we have
\begin{equation}
\begin{split}
\overline{Q}^{(1)}_s &= 
\sum_{k=0}^s \;\;\sum_{1\le b_1 \le \cdots \le b_{s-k} \le a}
\overline{w}_{b_1}\cdots \overline{w}_{b_{s-k}}
\sum_{a< c_1 \le \cdots \le c_k \le n+1}
\overline{w}_{c_1}\cdots \overline{w}_{c_k}\\
&= \sum_{k=0}^s \;\;
s_{(s-k)}(\overline{w}_1,\ldots, \overline{w}_a)
s_{(k)}(\overline{w}_{a+1},\ldots, \overline{w}_n).
\end{split}
\end{equation}
Since the first factor is free from $y_a$ whereas the latter contains it 
as the overall multiplier $y_a^k$,  
the derivative 
$y_a \frac{\partial \overline{Q}^{(1)}_s}{\partial y_a}$ 
coincides with (\ref{k:hs}).
This completes the proof of (\ref{k:mka}) hence that of Proposition \ref{pr:bcd}.
We have finished the proof of Theorem \ref{thk:main}.

\section{Example}\label{sec:ex}

In this section we focus on 
the simplest choice $B^{(r)}_s = B^{(1)}_1$ for the 
set of local states.

\subsection{Explicit formula of the limit shape by Schur functions}

We set $p_a = \pi^{(1)}_1(a)$ for $a \in [1,n+1]$,
where $a$ in the RHS signifies the element of $B^{(1)}_1$ 
corresponding to the semistandard tableau containing $a$ 
in the single box Young diagram.  
So $p_1$ is the density of empty sites and 
$p_a$ with $a \in [2,n+1]$ is the density of balls with color $a$. 
According to (\ref{noc}), one has
$\pi^{(1)}_1(a) = e^{\varpi_1-\alpha_1-\cdots - \alpha_{a-1}}/Q^{(1)}_1$.
Therefore in the regime $\alpha_1, \ldots, \alpha_n >0$ under consideration,
$1>p_1>p_2> \cdots > p_{n+1}>0$ holds.
Of course $p_1+ \cdots + p_{n+1}=1$ should also be satisfied.
For $n=2$ this notation agrees with Example \ref{ex:ca}. 
According to (\ref{noc}) we set 
\begin{align}
p_a = \frac{w_a}{w_1+ \cdots + w_{n+1}}\qquad (1 \le a \le n+1)
\end{align}
in terms of $w_j = z^{-1}_{j-1}z_j$ given in (\ref{k:wde}).
The denominator is $Q^{(1)}_1$ (\ref{qf}).
Thus we find (cf. \cite{KL})
\begin{align}\label{a:zp}
z_a = u^{-\frac{a}{n+1}}p_1p_2\cdots p_{a},
\quad 
u= p_1p_2\cdots p_{n+1}\qquad (0 \le a \le n+1).
\end{align} 
From (\ref{k:ew}), the ball densities $p_1, \ldots, p_{n+1}$ are 
connected to the Young diagram densities 
$\nu_1, \ldots, \nu_n$ as
\begin{align}\label{k:np}
\nu_a = p_{a+1} + p_{a+2} + \cdots + p_{n+1} \qquad (1 \le a \le n).
\end{align}

The equation of state (\ref{est}) reads
\begin{align}\label{k:est3}
z_a \frac{\partial}{\partial z_a}
\log\Bigl(\sum_{j=1}^{n+1}\frac{z_j}{z_{j-1}}\Bigr)
= \delta_{a,1}-\sum_{b=1}^n C_{ab}\nu_b\qquad 
(1 \le a \le n),
\end{align}
where $z_0= z_{n+1}=1$ as in (\ref{k:wde}).
One can easily check that (\ref{k:est3}) is satisfied 
by $z_a$ and $\nu_a$ in (\ref{a:zp}) and (\ref{k:np}) 
provided that $p_1+ \cdots + p_{n+1}=1$ is valid.
This essentially achieves the step (i) in Section \ref{ss:de}.
For the remaining steps (ii) and (iii), 
we have already given the general solution in Theorem \ref{thk:main}.
In the present case the solution $\varepsilon^{(a)}_i  = h^{(a)}_i$ 
can be written down concretely by setting $s=1$ in 
Example \ref{ex:h}:
\begin{align}\label{k:mst}
\varepsilon^{(a)}_i  = \frac{s_{(i^a,1)}(w_1,\ldots, w_{n+1})}
{s_{(i^a)}(w_1,\ldots, w_{n+1})
s_{(1)}(w_1,\ldots, w_{n+1})}\qquad 
(w_a = u^{-\frac{1}{n+1}}p_a).
\end{align}
For simplicity denote the Schur polynomial
$s_\lambda(w_1,\ldots, w_{n+1})$ by $s_\lambda$.
Then the quantity (\ref{k:eta}) is given neatly as 
\begin{align}\label{k:eai}
\eta^{(a)}_i = \varepsilon^{(a)}_i - \varepsilon^{(a)}_{i-1}
= \frac{s_{(i^a,1)}}{s_{(i^a)}s_{(1)}}
-\frac{s_{((i-1)^a,1)}}{s_{((i-1)^a)}s_{(1)}}
= \frac{s_{((i-1)^{a-1})}s_{(i^{a+1})}}
{s_{(i^a)}s_{((i-1)^a)}s_{(1)}},
\end{align}
where we have used a bilinear identity among
the Schur polynomials.

In the simplest case $n=1$,  
the equation of state (\ref{k:est3}) becomes 
$\nu_1= (1+z_1^2)^{-1}$.
From $s_{(i)}(w_1,w_2) = \frac{z_1^{i+1}-z_1^{-i-1}}{z_1-z^{-1}_1}$ and 
$s_{(i,i)}(w_1,w_2)=1$, 
the result (\ref{k:eai}) reduces to 
\begin{align}\label{k:sss}
\eta^{(1)}_i = \frac{1}{s_{(i)}s_{(i-1)}s_{(1)}}
= \frac{\zeta^i(\zeta-1)^2}
{(\zeta+1)(\zeta^{i}-1)(\zeta^{i+1}-1)},\qquad
\zeta = z_1^{-2}.
\end{align}
This agrees with a corresponding result in \cite{KL}.

\subsection{Scaling behavior of the width of the Young diagrams}

Note that $\eta^{(a)}_i$ (\ref{k:eta}) 
is the $1/L$ scaled length of the $i$-th column of the Young diagram $\mu_a$.
Moreover we have used the scaling behavior (\ref{mpel}).
Thus the above result (\ref{k:eai}) should be understood to be effective 
in the range $1 \le i \le I_a$ where
\begin{align}\label{k:Ja}
L \eta^{(a)}_{I_a} \simeq 1\qquad (L \gg 1).
\end{align}
This yields a crude estimate of the scaling behavior of the 
{\em width} $I_a$ of the Young diagram $\mu_a$ 
as $L$ grows large.

Let us investigate the consequence of (\ref{k:Ja}) closely for the 
regime $1> p_1 > p_2 > \cdots > p_{n+1}>0$.
From (\ref{k:sc}) and $w_1 > w_2 \cdots > w_{n+1}>0$, we see that  
$s_{(i^a)}= s_{(i^a)}(w_1,\ldots, w_{n+1})$ behaves as  
\begin{align*}
s_{(i^a)}=
\frac{\Delta(w_1,\ldots, w_a)\Delta(w_{a+1},\ldots, w_{n+1})}
{\Delta(w_1,\ldots, w_{n+1})}
(w_1\cdots w_a)^{i+n-a+1}\bigl(1+ O(e^{-c i})\bigr) \qquad
(i \rightarrow \infty)
\end{align*}
for some constant $c>0$, where
$\Delta(x_1,\ldots, x_m) = \prod_{1 \le j < k \le m}(x_j-x_k)$
is the Vandermonde determinant.
Applying this to (\ref{k:eai}) and using (\ref{a:zp}) we find 
that $\eta^{(a)}_i$ tends to 0 as $i \rightarrow \infty$ as 
\begin{align}\label{k:ei}
\eta^{(a)}_i \simeq 
\frac{\prod_{j=1}^a(p_j-p_{a+1})
\prod_{j=a+1}^{n+1}(p_a-p_j)}
{\prod_{j=1}^{a-1}(p_j-p_a)\prod_{j=a+2}^{n+1}(p_{a+1}-p_j)}
\frac{1}{p_a}\Bigl(\frac{p_{a+1}}{p_a}\Bigr)^{i+n-a}
\quad (i \rightarrow \infty)
\end{align}
up to exponentially small corrections.
Thus the estimate (\ref{k:Ja}) implies the logarithmic scaling 
\begin{align}\label{k:ia}
I_a \simeq \frac{\log L}{\log \frac{p_a}{p_{a+1}}}\qquad 
(L \rightarrow \infty)
\end{align}
in the leading order.
For $n=a=1$ and $1-p_1=p_2=p$, (\ref{k:ei}) and (\ref{k:Ja}) 
lead to
$I_1 \simeq \log\Bigl(\frac{(1-2p)^2L}{1-p}\Bigr)/\log\frac{1-p}{p}$.
Incidentally this reproduces $\mu_L$ in \cite[Th.2(i)]{LLP}
including the coefficient.

The result (\ref{k:ia}) indicates yet another scaling behavior at 
$p_a = p_{a+1}$.
For simplicity let us consider the most degenerate case of such situations 
$p_a = \frac{1}{n+1}$ for all $a \in [1,n+1]$. 
It corresponds to the completely 
random distribution of the balls and empty sites.
Then we have $w_a=1$ and (\ref{k:eai}) simplifies to 
\begin{align}
\eta^{(a)}_i = \frac{a(n+1-a)}{(n+1)(i+a-1)(i+a)}.
\end{align}
Therefore the estimate (\ref{k:Ja}) gives 
\begin{align}
I_a \simeq \sqrt{\frac{a(n+1-a)L}{n+1}}.
\end{align}
This square root scaling behavior is a signal of criticality
as observed in \cite{LLP}.

\subsection{Numerical check}

Here we deal with the $n=2$ case, i.e., 2-color BBS.
The relevant KR crystals are $B^{(1)}_l$ and $B^{(2)}_l$.
We parametrize the both by the set 
$\{x=(x_1,x_2,x_3)\in (\Z_{\ge 0})^3 \mid x_1+x_2+x_3=l\}$.
For $B^{(1)}_l$,  $x_i$ is the number of letter $i$ contained 
in the semistandard tableau on the Young diagram of single row shape 
with length $l$.
For $B^{(2)}_l$,
$x_1, x_2, x_3$ are the number of the columns  
 ${2 \atop 3},{1 \atop 3},{1 \atop 2}$ 
in the semistandard tableau on the Young diagram of double row shape 
with length $l$.
The combinatorial 
$R: x\otimes y \mapsto \tilde{y} \otimes \tilde{x}$ 
and the local energy $H(x\otimes y)$ necessary 
to compute $E^{(1)}_l$ and $E^{(2)}_l$ are those acting on 
$B^{(1)}_l \otimes B^{(1)}_1$ and 
$B^{(2)}_l \otimes B^{(1)}_1$, respectively.
They are given explicitly by piecewise linear formulas in 
\cite[sec.2.2]{KOY}$|_{n=3}$ with a slight conventional adjustment.
We summarize them in  Table 1.

\begin{table}[H]\label{tab:1}
\begin{tabular}{c|c|c}

& Combinatorial $R$ & local energy $H(x\otimes y)$\\
\hline
$E^{(1)}_l$ & $R|_{m=1}$ (2.1)  & $Q_0(x,y)$ \\
$E^{(2)}_l$ & ${}^\vee\!R|_{m=1}$ (2.3)  & $P_{-1}(x,y)$ 

\end{tabular}
\caption{Notations and equation numbers except in the first column 
are those in \cite{KOY}. }
\end{table}

We have generated a BBS state in $\{1,2,3\}^L$
with a prescribed ball densities $1>p_1>p_2>p_3>0$ 
and length $L=1000$ by computer.
Calculating the energy $E^{(1)}_l, E^{(2)}_l$ by (\ref{ok:Eai})
we extract the Young diagrams $\mu_1$ and $\mu_2$.
After scaling by $1/L$ vertically we plot them (called ``BBS") 
together with the prediction (\ref{k:eai}) (called ``TBA") 
in Figure 1 and 2 below.

\begin{figure}[H]\label{fig1}
\begin{tabular}{c}
\begin{minipage}[t]{0.45\hsize}
\begin{center}
\includegraphics[scale=1]{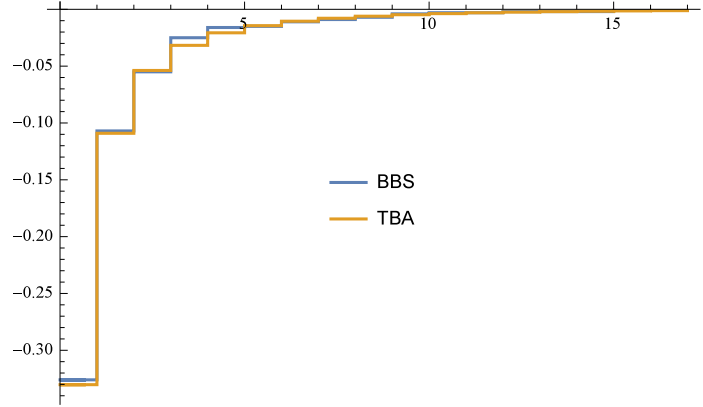}
\end{center}
\end{minipage}
\end{tabular}
\caption{Vertically $1/L$ scaled Young diagram 
$\mu_1$. $L=1000, (p_1, p_2, p_3)=(\frac{7}{18}, \frac{6}{18}, \frac{5}{18})$.} 
\end{figure}
\begin{figure}[H]\label{fig2}
\begin{tabular}{c}
\begin{minipage}[t]{0.45\hsize}
\begin{center}
\includegraphics[scale=1]{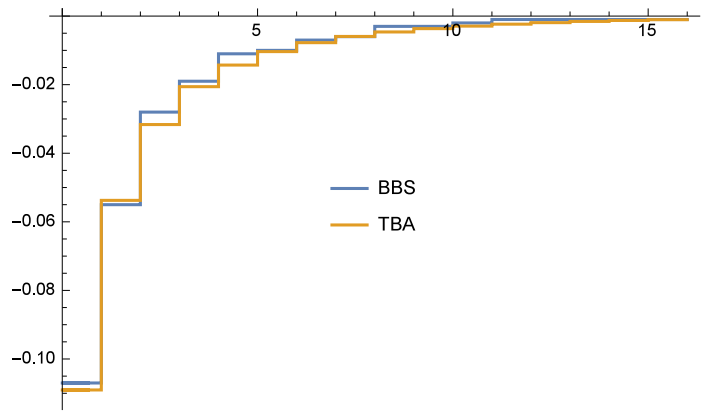}
\end{center}
\end{minipage}
\end{tabular}
\caption{Vertically $1/L$ scaled Young diagram $\mu_2$. $L=1000, 
(p_1, p_2, p_3)=(\frac{7}{18}, \frac{6}{18}, \frac{5}{18})$.} 
\end{figure}

According to (\ref{k:Ja}) we have truncated the scaled 
$\mu_1$ and $\mu_2$ at the width $I_1=17$ and $I_2=16$.
The agreement of the numerical data from BBS 
and the TBA prediction is more or less satisfactory.

\section{Discussion}\label{sec:dis}

\subsection{Summary}
We have elucidated a new interplay among 
the randomized BBS, Markov processes of carriers, 
KR modules/crystals, combinatorial $R$, local energy, 
deformed characters, Fermionic formulas,
rigged configurations, Q and Y-systems, 
TBA equations and so forth.
Our main result is Theorem \ref{thk:main} which identifies  
the stationary local energy of the KR crystal (\ref{k:izm}) as 
the explicit solution to the difference equation (\ref{con}) originating from TBA.
It determines the equilibrium shape of the Young diagrams 
$\mu_1,\ldots, \mu_n$ in the scaling limit as in 
(\ref{k:es5}), (\ref{k:es}) and (\ref{k:eta}).
These random Young diagrams arise as the conserved quantities
(generalized soliton contents) of the randomized BBS and obey
the probability distribution given by the Fermionic form (\ref{prom}).

\subsection{Generalization to simply-laced case}
 
Although the above results 
are concerned with the quantum affine algebra $U_q(\hat{\mathfrak{g}})$
with $\hat{\mathfrak{g}}=A^{(1)}_n$,
all the essential ingredients are known or at least 
conjecturally/conceptually ready for general quantum affine algebras. 
In particular formulas for the simply-laced cases
$\hat{\mathfrak{g}} = A^{(1)}_n, D^{(1)}_n, E^{(1)}_{6,7,8}$
possess a quite similar and simple structure.
Let us explain them briefly and 
conjecturally describe the parallel results on 
the randomized BBS of type $ADE$ uniformly. 
We set $\mathfrak{g} = A_n, D_n, E_{6,7,8}$ 
according to $\hat{\mathfrak{g}} = A^{(1)}_n, D^{(1)}_n, E^{(1)}_{6,7,8}$
and $n=6,7,8$ for $\hat{\mathfrak{g}} = E^{(1)}_{6,7,8}$.
The matrix $(C_{ab})_{1 \le a,b \le n}$ is to be understood as the 
Cartan matrix of $\mathfrak{g}$ and 
the relation $a\sim b$ is defined by $C_{ab}=-1$.

The KR modules $\{W^{(r)}_s \mid (r,s) \in [1,n] \times \Z_{\ge 0} \}$ 
over $U_q(\hat{\mathfrak{g}})$
are specified in terms of the Drinfeld polynomials.
See for example \cite[sec.4.2]{KNS}.
The corresponding KR crystals 
$\{B^{(r)}_s \mid (r,s) \in [1,n]\times \Z_{\ge 0} \}$ 
have been constructed for 
$\hat{\mathfrak{g}}=D^{(1)}_n$ in \cite{OS}, whereas 
their existence is yet conjectural in general for 
$\hat{\mathfrak{g}} = E^{(1)}_{6,7,8}$.
Here we assume them and denote by 
$u^{(r)}_s \in B^{(r)}_s$ the unique element of weight $s\varpi_r$.
The local energy $H$ should be so taken as $H \in \Z_{\ge 0}$ and  
$H(u^{(a)}_i \otimes u^{(r)}_s) =0$ on any $B^{(a)}_i \otimes B^{(r)}_s$.
Set $Q^{(a)}_i = \sum_{b \in B^{(a)}_i} e^{\mathrm{wt}(b)}$.
Although it is no longer a character of an  irreducible
$\mathfrak{g}$ module in general, it satisfies the Q-system (\ref{qsys}).
See for example \cite{HKOTY} and \cite[sec.13]{KNS} and references therein.

The $U_q(\hat{\mathfrak{g}})$ BBS is formulated 
in the same manner as Section \ref{sec:bbs}.
Take the set of local states to be $B^{(r)}_s$.
We consider the randomized $U_q(\hat{\mathfrak{g}})$ BBS 
where the local states 
and the stationary measure of the carrier for the time evolution 
$T^{(a)}_i$ are given as Proposition \ref{pr:st}.
Then the stationary local energy $h^{(a)}_i$ 
takes the same form as (\ref{k:ep}) 
or equivalently (\ref{k:izm}).

Concerning the deformed character (\ref{k:xw}),
the corner transfer matrix energy $D$ in (\ref{k:d}) needs to be replaced by 
\begin{align}\label{k:db}
D(b_1 \otimes \cdots \otimes b_L)  = 
\sum_{j=1}^LH(b^\natural_j\otimes b^{(1)}_j) +
\sum_{1 \le i < j \le L}H(b_i \otimes b^{(i+1)}_j),
\end{align}
where $b^\natural_j \in B^{(k_j)}_{l_j}$ is the unique element such that 
$\varphi(b^\natural_j)= l_j \Lambda_0$.
See \cite[sec.5.1]{O} for a detailed account of this.
The first sum on the RHS is referred to as the boundary energy. 
It is $0$ for $A^{(1)}_n$ but is nontrivial for the other types.

The Fermionic form $M(B,\lambda,w)$ is defined 
by the same formulas as (\ref{mdef})--(\ref{wc}).
Theorem \ref{th:kss}\footnote{For general affine Lie algebra it is often called 
$X=M$ conjecture \cite{HKOTY,HKOTT} .} 
is valid for $D^{(1)}_n$ \cite{OSSS, N} 
and conjecturally valid for $E^{(1)}_{6,7,8}$.
Proposition \ref{pr:mq} has been shown in \cite{HKOTY}.
Lemma \ref{le:wick} is influenced by the boundary energy and replaced by 
\begin{equation}\label{wick2}
\begin{split}
&\frac{\partial}{\partial w} \log \chi_w(B_1\otimes \cdots \otimes B_L)|_{w=1}
\\
&\;\; = \sum_{1 \le i<j \le L}\frac{1}{Q_i Q_j}
\sum_{b \otimes c\in B_i \otimes B_j}
H(b \otimes c) e^{\mathrm{wt}(b \otimes c)}
+ \sum_{1\le i \le L}\frac{1}{Q_i}
\sum_{c \in B_i}H(b^\natural_i \otimes c)  e^{\mathrm{wt}(c)}.
\end{split}
\end{equation}

As for the TBA analysis, all the relations 
from (\ref{prom}) until (\ref{k:sn}) remain unchanged\footnote{The unique  
exception is the last expression in (\ref{k:hro}) which is specific to type $A_n$.}.
In particular, the property 
$\lim_{l \rightarrow \infty}
(Q^{(a)}_{l+1}/Q^{(a)}_l)= e^{\varpi_a}$ used to 
simplify (\ref{qoq}) is valid not only for 
non-exceptional cases \cite[Th. 7.1 (C)]{HKOTY} 
but for all types \cite[Prop.5.9]{He}.

We conjecture that Theorem \ref{thk:main} is also valid for 
type $D_n$ and $E_{6,7,8}$.
In fact admitting Theorem \ref{th:kss}, 
it can be shown that $\varepsilon^{(a)}_i = h^{(a)}_i$ 
provides a solution 
to the difference equation (\ref{con})\footnote{This assertion is the analogue 
of Proposition \ref{pr:de}, which was the ``first half" of Theorem \ref{thk:main}.}.
The proof uses (\ref{wick2}).
In particular with the notation (\ref{k:sn}),  its $L=2$ case 
captures the stationary local energy $h^{(a)}_i$ as
\begin{align}
\frac{\partial \log(Q^{(a)}_i \ast Q^{(r)}_s)}{\partial w} 
\Big|_{w=1} = 
h^{(a)}_i + \frac{1}{Q^{(a)}_i}
\sum_{c \in B^{(a)}_i}H(b^\natural_1 \otimes c)  e^{\mathrm{wt}(c)}
+ \frac{1}{Q^{(r)}_s}
\sum_{c \in B^{(r)}_s}H(b^\natural_2 \otimes c)  e^{\mathrm{wt}(c)}.
\end{align}
The point is that the effect of extra ``boundary terms" 
containing $b^\natural_1 \in B^{(a)}_i,
b^\natural_2 \in B^{(r)}_s$ is canceled by those in (\ref{wick2}), leaving  
the difference equation unchanged from (\ref{con}).
As for the boundary condition for the difference equation,
we conjecture that (\ref{k:otm}) or equivalently (\ref{k:mka}) 
holds universally for type $D_n$ and $E_{6,7,8}$.
It is an intriguing relation involving the local energy whose 
proof will shed new light into the KR crystals and the Q-system.

\subsection{Further outlook}
We expect the generalization to the non simply-laced cases
and twisted affine Lie algebras is 
also feasible albeit with a slight technical complexity.
Another obvious direction of a future research is periodic systems.
The generalized BBS for $A^{(1)}_n$ with the quantum space 
$(B^{(1)}_1)^{\otimes L}$ has been studied 
under the periodic boundary condition \cite{KT}.
It also has an $n$-tuple of Young diagrams 
as a label of iso-level sets for which 
a Fermionic formula \cite[eq.(57), Th.3]{KT} for the multiplicity 
has been obtained under a technical assumption.
It will be interesting to analyze it by TBA similarly to this paper.
In the simplest case $n=1$, the Fermionic formula has been fully  
justified and reduces to 
\begin{align}\label{k:pf}
\frac{L}{L-2M}\prod_{i\ge 1}\binom{p^{(1)}_i+m^{(1)}_i-1}{m^{(1)}_i}
\end{align}
for system size $L$ and $M$-ball sector with $M<\frac{L}{2}$
in the same notation as (\ref{pai})\footnote{$p^{(1)}_i = 
L - 2\sum_{j\ge 1}\min(i,j)m^{(1)}_j,\,M= \sum_{j\ge 1} j m^{(1)}_j$.
One can show $(\ref{k:pf}) \in \Z_{\ge 0}$.}.
So at least in this simplest situation, 
the scaled limit shape of the Young diagram remains the same as (\ref{k:sss}).

There are a number of further challenging problems
to be investigated.
We list a few of them as closing remarks.

\begin{enumerate}

\item Study the limit shape problem
when the BBS states are inhomogeneous as   
$B^{(r_1)}_{s_1} \otimes \cdots \otimes B^{(r_L)}_{s_L}$
with a given statistical distribution of $(r_i, s_i)$.

\item Can one architect a BBS like dynamical system 
whose Markov process of carriers has the stationary measure
described by $q$-characters \cite{FR}?

\item Can one extend the TBA analysis so as to include 
$w$-binomials in (\ref{prom}) with $w \neq 1$? 
What is the counterpart of the BBS corresponding to such a generalization?
 
\item Our TBA analysis in this paper was spectral parameter free.
See the remark after (\ref{ser}).
Is there any Yang-Baxterization of Theorem \ref{thk:main}?

\end{enumerate}

\subsection*{Acknowledgements}
The authors thank Rei Inoue, Makiko Sasada and Satoshi Tsujimoto for kind interest.
This work is supported by 
Grants-in-Aid for Scientific Research No.~18H01141 from JSPS.

\end{document}